\documentclass[a4paper,fleqn]{cas-dc}

\usepackage[numbers,sort&compress]{natbib}
\usepackage{amsmath,amssymb,amsfonts,amsthm,mathtools}
\usepackage{graphicx}
\usepackage{epsfig}
\usepackage{array}
\usepackage{multirow}
\usepackage{dsfont}
\usepackage{xcolor}
\usepackage{soul}
\usepackage{url}
\usepackage[ruled]{algorithm2e}
\usepackage{algorithmic}
\usepackage{subfig}
\usepackage{etoolbox}
\allowdisplaybreaks

\graphicspath{{images/}{figs/}}

\newtheorem{theorem}{Theorem}
\newtheorem{lemma}{Lemma}

\newtheorem{proposition}{Proposition}
\newtheorem{corollary}{Corollary}[theorem]

\newtheorem*{remark*}{Remark}

\newcommand{\B}{}
\newcommand{\transport}[2]{u_{#1#2}}
\newcommand{\transportCapacity}{U}
\newcommand{\resourceCapacity}{N}

\DeclareMathOperator{\sgn}{sgn}

\begin{document}
\let\WriteBookmarks\relax
\def\floatpagepagefraction{1}
\def\textpagefraction{.001}

\shorttitle{Storage and Transport Capacity Design for a Self-Reliable Two-Node Stochastic Resource System}
\shortauthors{Dey et~al.}

\title[mode=title]{Storage and Transport Capacity Design for a Self-Reliable Two-Node Stochastic Resource System}\tnotemark[1]

\tnotetext[1]{This work is supported by Advanced Research Projects Agency-Energy OPEN through the project titled ``Rapidly Viable Sustained Grid'' via grant no. DE-AR0001016 and the US Department of Energy project titled ``CyDERMS: Center for Cybersecurity \& Resiliency of Distribution Energy Resources and Microgrids-integrated Distribution Systems'' via award no. DE-CR0000040.}

\author[1]{Arnab Dey}\fnmark[1]
\ead{dey00011@umn.edu}

\author[2]{Vivek Khatana}\cormark[1]\fnmark[1]
\ead{vkhatana@illinois.edu}

\author[3,4]{Ankur Mani}
\ead{ankur.mani@gmail.com}

\author[1]{Murti V. Salapaka}
\ead{murtis@umn.edu}

\fntext[fn1]{Equal contribution.}

\affiliation[1]{organization={Department of Electrical and Computer Engineering, University of Minnesota},
                city={Minneapolis},
                state={MN},
                country={USA}}

\affiliation[2]{organization={Department of Mechanical Science and Engineering, University of Illinois Urbana-Champaign},
                city={Urbana},
                state={IL},
                country={USA}}

\affiliation[3]{organization={Department of Industrial and Systems Engineering, University of Minnesota},
                city={Minneapolis},
                state={MN},
                country={USA}}

\affiliation[4]{organization={Institute for Sustainability, Energy, and Environment, University of Illinois Urbana-Champaign},
                city={Urbana},
                state={IL},
                country={USA}}

\cortext[cor1]{Corresponding author}

\begin{abstract}
We study a two-node stochastic resource system operating over a finite horizon. Each node experiences uncertain supply and demand and is equipped with finite storage. The objective is to ensure that resource levels remain within prescribed limits with high probability. To this end, we formulate a chance-constrained capacity-design problem in which resources can be exchanged through a capacity-limited transport link. We characterize the minimum storage required at each node, derive the optimal transport policy, and quantify the trade-off between storage and transport capacities. Our results show the existence of a critical transport-capacity threshold that enables full risk pooling between the nodes. Moreover, this threshold decreases with the operating horizon, implying that full-pooling performance can be achieved with progressively smaller transport capacity over longer horizons.
\end{abstract}

\begin{keywords}
Stochastic processes \sep Stochastic differential equations \sep 
Stochastic control \sep Chance constraints \sep Optimal resource planning \sep Bang-bang control
\end{keywords}







\maketitle

\section{Introduction}
Many networked systems consist of nodes that locally generate and consume resources under uncertainty. To mitigate shortages and surplus losses, each node is equipped with storage, whose capacity must be carefully designed due to cost constraints. When nodes are connected via transport links, resources can be exchanged, leading to a joint design problem over storage and transport capacities. The objective is to ensure that node-level resources remain within prescribed limits with high probability over a finite horizon. This setting arises in supply chains, where transport redistributes goods across production and distribution stages \cite{song2020capacity,lee2005optimization}, and in power systems, where interconnections enable energy sharing under renewable uncertainty \cite{alma9917556284304856,igder2022service,smith2013us}.

In this article, we study the trade-off between local storage and transport capacity in a two-node stochastic resource system. We address three questions: (i) the minimum storage required at each node to ensure capacity constraints are met with high probability over a finite horizon under stochastic supply and demand; (ii) the extent to which a transport link can reduce this storage requirement and the optimal transport policy for a given capacity; and (iii) the scaling laws governing the trade-off between storage, transport capacity, and operating horizon.

\subsection{Related Work}
Since uncertainty in production and consumption is often modeled by Brownian motion, ensuring that stored resources remain within prescribed limits with high probability naturally leads to stochastic control problems. A classical line of work studies Brownian motion with controlled drift. Benes~\cite{benevs1974girsanov} shows that optimal drift control for minimizing expected terminal cost in one dimension has a bang-bang structure, later extended to more general cost settings in~\cite{benevs1975composition,ikeda1977comparison,davis1979predicted,benevs1980some,christopeit1982benevs,chen2023bang,karatzas1993finite}. Related first-passage and reach-avoid problems are studied in~\cite{katayama1971stochastic,katayama1972optimal}, where bang-bang controls are again shown to be optimal, and optimal stopping formulations are considered in~\cite{karatzas2002leavable}. Motivated by this structure, transition properties of bang-bang drift processes are characterized in~\cite{shreve1981reflected}, with connections to reflected Brownian motion developed in~\cite{peskir1999maximal,graversen2000extension,peskir2006reflecting}, and further appearing in stochastic reachability analysis~\cite{abate2008probabilistic,gleason2019maximal}. However, these works do not address explicit optimal policies for the transport-coupled capacity design problem considered in this article.

In summary, existing work on stochastic control of Brownian motion largely focuses on scalar processes, while multidimensional extensions typically consider independent dynamics. Consequently, they do not address systems in which stochastic processes are coupled through resource transport. Motivated by applications that jointly exploit storage and transport to manage uncertainty, this article studies capacity design for interacting stochastic resource processes. Specifically,\\
1) For a connected two-node system with independent Brownian uncertainty at each node, we show that the storage capacity required to avoid curtailment and/or surplus losses over $[0,T_f]$ with high probability scales as $\sqrt{T_f}$. Further, we prove that transport can reduce the storage requirement only to a strictly positive lower bound, regardless of the transport capacity.\\
2) We prove that if this minimum storage level is maintained over increasing horizons, then the transport capacity can be decreased and designed proportionally to $1/\sqrt{T_f}$.\\
3) We derive the optimal transport feedback policy for any fixed transport capacity and show that it is bang-bang.

The remainder of the paper is organized as follows. Sections~\ref{sec:system_model} and~\ref{sec:problem_formulation} present the model and problem formulation. Section~\ref{sec:solution_methodology} develops the main results, including benchmark storage limits, the optimal transport policy, storage--transport scaling laws, and a finite-horizon design procedure. We conclude with key findings and design insights.
\section{System Model}\label{sec:system_model}
Consider a two-node facility system, shown in Fig.~\ref{fig:two_node_system}, where each node maintains a local resource and the nodes exchange resources via a transport link. Let $T_f$ denote the operating horizon, $\resourceCapacity$ the storage capacity at each node $i\in\{1,2\}$, and $\transport{1}{2}(t)$ the flow from node $1$ to node $2$ at time $t\in[0,T_f]$. Since the link is bidirectional, $\transport{2}{1}(t)=-\transport{1}{2}(t)$ with $|\transport{1}{2}(t)|\leq \transportCapacity$ for all $t\in[0,T_f]$, where $\transportCapacity$ denotes the transport capacity. Let $X_i(t)$ denote the resource level at node $i\in\{1,2\}$. Production and demand uncertainties induce stochastic evolution of $X_i(t)$, which we model as follows:
\begin{equation}\label{eq:brownian_resource_model_two_mg}
    \begin{aligned}
    X_1(t) &= \textstyle  X_1(0)+\sigma W_1(t)-\int_0^t \transport{1}{2}(s)\text{d}s, \\
    X_2(t) &=  \textstyle  X_2(0)+\sigma W_2(t)+\int_0^t \transport{1}{2}(s)\text{d}s,
    \end{aligned}
\end{equation}
where $X_i(0)$ is the initial resource at node $i$, $W_i(t)$ are independent Wiener processes modeling uncertainty, and $\sigma$ is the volatility parameter. The term $\int_0^t \transport{1}{2}(s)\mathrm{d}s$ denotes the cumulative transfer from node $1$ to node $2$ by time $t$. If $X_i(t)=\resourceCapacity$, excess production is curtailed, while $X_i(t)=0$ results in unmet demand. The objective is to choose $\resourceCapacity$ and the transport policy $\transport{1}{2}(t)$ such that $X_i(t)\in(0,\resourceCapacity)$ for both nodes over $[0,T_f]$ with high probability. The next section formalizes the problem.
\section{Problem Formulation}\label{sec:problem_formulation}
Let $X_{i,T_f}^{sup}\mspace{-5mu}:=\mspace{-5mu}\sup_{t\in[0,T_f]}X_i(t)$ and $X_{i,T_f}^{inf}\mspace{-5mu}:=\mspace{-5mu} \inf_{t\in[0,T_f]}X_i(t),$ $ i\mspace{-4mu}\in\mspace{-4mu}\{1,2\}$, 
denote the maximum and minimum resource levels at node $i\in \{1,2\}$ over the horizon $[0,T_f]$. Given $\delta\in(0,1)$, we require both node processes to remain in the safe region $(0,\resourceCapacity)$ with probability at least $1-\delta$, i.e.,
\begin{align}\label{eq:chance_constraint_two_mg}
    \textstyle \mathbb{P}[\cap_{i=1}^{2} \{X_{i,T_f}^{sup}\mspace{-4mu}<\mspace{-4mu} \resourceCapacity, \textstyle X_{i,T_f}^{inf}\mspace{-4mu}>\mspace{-4mu} 0\}] \geq\mspace{-4mu} 1\mspace{-4mu}-\mspace{-4mu}\delta.
\end{align}
Our objective is to find the minimum resource capacity $\resourceCapacity$ satisfying~\eqref{eq:chance_constraint_two_mg}. Let $\pi(t)=\transport{1}{2}(t)$ denote the transport policy and define the admissible policy set $\Pi(\transportCapacity)\mspace{-4mu}:=\mspace{-4mu}\{\pi: \sup_{0\leq t \leq T_f}|\pi(t)|\mspace{-4mu}\leq\mspace{-4mu} \transportCapacity\}$. We consider the initial resource allocation $X_i(0)=\alpha\resourceCapacity$, where $\alpha\in[0,1]$. Thus, given $\transportCapacity$, $T_f$, and $\delta \in (0,1)$, we solve
\begin{align}\label{eq:problem_statement_two_mg}
\textstyle \min_{\alpha, \resourceCapacity, \pi} & \textstyle \hspace{0.1in} \resourceCapacity \\
  & \hspace{-1cm}\text{subject to} \ (\ref{eq:brownian_resource_model_two_mg}),\ \textstyle \resourceCapacity \geq 0,\ \textstyle 0 \leq \alpha \leq 1, \pi \in \Pi(\transportCapacity),\nonumber\\
  & \hspace{-1cm}\textstyle \mathbb{P}[\cap_{i=1}^{2} \{X_{i,T_f}^{sup}\mspace{-4mu}<\mspace{-4mu} \resourceCapacity, \textstyle X_{i,T_f}^{inf}\mspace{-4mu}>\mspace{-4mu} 0\}] \geq\mspace{-4mu} 1\mspace{-4mu}-\mspace{-4mu}\delta\nonumber.
\end{align}
The solution determines the storage capacity, initial resource level, and transport policy used over $[0,T_f]$. In the following sections, we describe the solution methodology.
\begin{figure}
     \centering
     \includegraphics[scale=0.275,trim={0.3cm 7.4cm 3.3cm 4.9cm},clip]{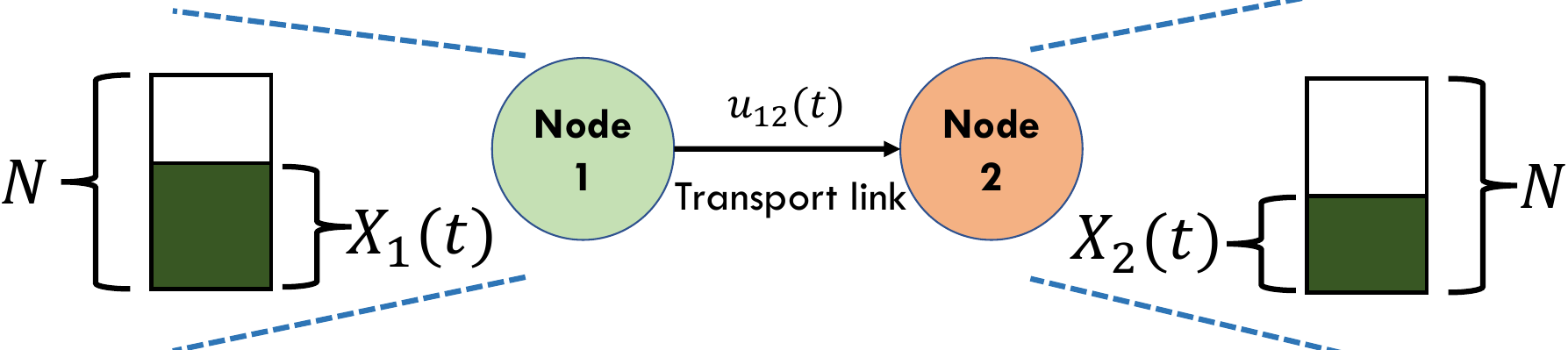}
     \caption{Two-node system model.}
     \label{fig:two_node_system}
\end{figure}
\section{Solution Methodology}\label{sec:solution_methodology}
Towards solving~\eqref{eq:problem_statement_two_mg}, we first provide some fundamental limits on the resource capacity to establish some benchmarks that clarify the role of resource sharing.
\subsection{Fundamental Limits}
First, we analyze a single stochastic resource without transport interaction and derive the capacity required to keep it within its admissible limits over $[0,T_f]$ with high probability. This result serves as a reference for the two-node setting. We then use it to study two limiting interaction regimes: zero transport capacity, where the nodes behave as independent processes, and infinite transport capacity, where the two processes can be fully aggregated. These cases establish fundamental benchmarks on the storage savings achievable through transport.

\subsubsection{Single process}

For a single process $X(t)$, we model the dynamics as $X(t)=\alpha \resourceCapacity+\sigma W(t)$, where $W(t)$ is a standard Wiener process. Here we characterize the minimum resource capacity required to ensure that $X(t)$ remains within the safe region $(0,\resourceCapacity)$ over the time horizon $[0,T_f]$ with high probability. Denoting $X_{T_f}^{sup} := \sup_{0 \leq t \leq T_f}X(t), X_{T_f}^{inf}:= \inf_{0 \leq t \leq T_f}X(t)$, the problem is as follows: Given $T_f, \delta \in (0,1)$,
\begin{align}\label{eq:problem_statement_single_mg}
\textstyle \min_{\alpha, \resourceCapacity} & \textstyle \hspace{0.1in} \resourceCapacity \\
  & \hspace{-1cm}\text{subject to} \ X(t)=\alpha \resourceCapacity + \sigma W(t),\ \textstyle \resourceCapacity \geq 0,\ \textstyle 0 \leq \alpha \leq 1,\nonumber\\
  & \hspace{-1cm}\textstyle \mathbb{P}[X_{T_f}^{sup}\mspace{-4mu}<\mspace{-4mu} \resourceCapacity, \textstyle X_{T_f}^{inf}\mspace{-4mu}>\mspace{-4mu} 0] \geq\mspace{-4mu} 1\mspace{-4mu}-\mspace{-4mu}\delta\nonumber.
\end{align}
Our earlier work~\cite{dey2024guaranteeingACC} establishes that the solution to~\eqref{eq:problem_statement_single_mg} is given by $(\alpha^\star, N^\star) := (1/2,2\sqrt{2}\sigma \sqrt{\ln(2/\delta)}\sqrt{T_f})$. Next, we consider the limiting cases of interaction between the two nodes.
\subsubsection{Limiting cases of interaction}\label{subsec:limiting_cases}
We consider two benchmark transport regimes: no interaction, $\transportCapacity=0$, and perfect interaction, $\transportCapacity\to\infty$. For each case, we derive the minimum resource capacity required to satisfy the chance constraint~\eqref{eq:chance_constraint_two_mg}. These limits quantify the maximum storage reduction achievable through transport. The chance constraint~\eqref{eq:chance_constraint_two_mg} can be equivalently written as:
\begin{align}\label{eq:chance_constraint_multi_mg_equivalent}
    & \textstyle \mathbb{P}[\cup_{i=1}^{2} \{\{X_{i,T_f}^{sup} \geq \resourceCapacity\} \cup \{X_{i,T_f}^{inf} \leq 0\}\}] \leq \delta.
\end{align}
When $\transportCapacity=0$, the processes do not interact and evolve independently as $X_i(t)=X_i(0) + \sigma W_i(t), i\in \{1,2\}$. Since $W_1(t)$ and $W_2(t)$ are independent of each other, we have
$ \mathbb{P}[\cup_{i=1}^{2}\{\{X_{i,T_f}^{sup}\geq \resourceCapacity\}\}\cup \{\{X_{i,T_f}^{inf}\leq 0\}\}] = \textstyle \mathbb{P}[\{X_{1,T_f}^{sup}\geq \resourceCapacity\}\cup \{X_{1,T_f}^{inf}\leq 0\}] + \mathbb{P}[\{X_{2,T_f}^{sup}\geq \resourceCapacity\}\cup \{X_{2,T_f}^{inf}\leq 0\}]. $ Since the processes are independent and identically distributed, we can satisfy~\eqref{eq:chance_constraint_two_mg} by treating each process separately and choosing the minimum storage capacity $\resourceCapacity$ for each node such that
$ \mathbb{P}[\{X_{1,T_f}^{sup}\geq N\}\cup \{X_{1,T_f}^{inf}\leq 0\}] \leq \delta/2, \text{ and, } \mathbb{P}[\{X_{2,T_f}^{sup}\geq N\}\cup \{X_{2,T_f}^{inf}\leq 0\}] \leq \delta/2$
thus satisfying~\eqref{eq:chance_constraint_multi_mg_equivalent} in unison. Let $\resourceCapacity^{(0)}$ denote the needed resource capacity. Following the analysis of single process and replacing $\delta$ with $\delta/2$ in the solution of~\eqref{eq:problem_statement_single_mg}, we can obtain $\resourceCapacity^{(0)} := 2\sqrt{2}\sigma\sqrt{\ln(4/\delta)}\sqrt{T_f}$.

When $\transportCapacity\to\infty$, the two processes can be fully pooled and treated as a single aggregate process. With $X_i(0)=\alpha\resourceCapacity$, define
$X_c(t):=X_1(t)+X_2(t)=2\alpha\resourceCapacity+\sigma(W_1(t)+W_2(t))$. Since $W_1(t)+W_2(t)\sim \sqrt{2}W_c(t)$~\cite{karatzas2012brownian}, where $W_c(t)$ is a Wiener process, we obtain $X_c(t)=2\alpha\resourceCapacity+\sigma_c W_c(t), \ \sigma_c:=\sqrt{2}\sigma.$ Because the aggregate process $X_c(t)$ is independent of the transport policy, the required per-node capacity $\resourceCapacity^{(\infty)}$ can be obtained from the single-process analysis applied to $X_c(t)$.
Here, replacing $\sigma$ with $\sigma_c$ in the solution of~\eqref{eq:problem_statement_single_mg}, we obtain $\resourceCapacity^{(\infty)} := 2\sigma\sqrt{\ln(2/\delta)}\sqrt{T_f}$.

In summary, infinite transport capacity reduces the minimum resource capacity by a factor $\frac{\resourceCapacity^{(0)}}{\resourceCapacity^{(\infty)}}=\sqrt{\frac{2\ln(4/\delta)}{\ln(2/\delta)}}$ relative to the no-transport case. Since higher transport capacity entails greater infrastructure cost, a natural question is whether the chance constraint~\eqref{eq:chance_constraint_multi_mg_equivalent} can be satisfied with the minimal resource capacity $\resourceCapacity^{(\infty)}$ using only finite transport capacity. We address this question next.

\subsection{Optimal Transport Policy Under Specified Transport and Storage Capacities}
In this section, we focus on characterizing the optimal transport policy for fixed resource capacity $\resourceCapacity$ and transport capacity $\transportCapacity$. In particular, given $\resourceCapacity,\transportCapacity$, $X_i(0)=x_i \in (0,\resourceCapacity)$, and horizon $T_f$, we solve:
\begin{align}\label{eq:prob_maximization_two_mg}
    & \textstyle \max_{\pi\in \Pi(\transportCapacity)} \mathbb{P}[\cap_{i=1}^{2} \{X_{i,T_f}^{sup}\mspace{-4mu}<\mspace{-4mu} \resourceCapacity, \textstyle X_{i,T_f}^{inf}\mspace{-4mu}>\mspace{-4mu} 0\}]\nonumber\\
    & \text{s. t. } X_i \text{ follows }\eqref{eq:brownian_resource_model_two_mg} \text{ with } X_i(0)=x_i, i \in \{1,2\}.
\end{align}
This problem is closely related to the power-sharing problem studied in our earlier work~\cite{dey2024guaranteeingACC}, where a dynamic programming approach was used to derive the optimal transfer policy for two interacting stochastic storage processes under a fixed line capacity. That analysis shows that the optimal policy has a \textit{bang-bang} structure: at each time, the node with higher resource transfers to the one with lower resource using full transport capacity. Translating this to the present setting, the optimal policy for~\eqref{eq:prob_maximization_two_mg} is

\begin{align}\label{eq:optimal_policy_bang_bang}
\pi^\star(t) = \transport{1}{2}^\star(t) = 
\begin{cases}
\transportCapacity, & \mbox{if} \ X_1(t)>X_2(t),\\
-\transportCapacity, & \mbox{if} \ X_1(t)<X_2(t),\\
0, & \mbox{if} \ X_1(t) = X_2(t).
\end{cases}
\end{align}
Thus, for any fixed transport capacity $\transportCapacity$, the optimal strategy is to fully utilize the link in the direction that reduces the resource imbalance between the two nodes. Since transport capacity contributes to infrastructure cost, we next study the storage–transport trade-off and characterize how finite transport capacity can be chosen to satisfy~\eqref{eq:chance_constraint_two_mg} while maintaining the minimum storage level $\resourceCapacity^{(\infty)}$.
\subsection{Critical Transport Capacity}\label{subsec:capacity_trade_offs}
Since $\resourceCapacity^{(\infty)}$ is the minimum storage capacity achievable under perfect pooling, it is desirable from a cost perspective. However, higher transport capacity increases infrastructure cost, leading to the question of whether the chance constraint~\eqref{eq:chance_constraint_multi_mg_equivalent} can still be satisfied with storage fixed at $\resourceCapacity^{(\infty)}$ using only finite transport capacity. We address this by characterizing the critical transport-capacity scaling required to maintain feasibility at the minimal storage level, under an asymptotic regime where the operating horizon grows without bound.

Suppose each node has resource capacity $\resourceCapacity^{(\infty)}$. Let the transport capacity scale as $\transportCapacity=c_pT_f^{\xi}$, where $\xi\in\mathbb{R}$ and $c_p>0$ is independent of $T_f$. Let $X_1^{\pi^\star}(t)$ and $X_2^{\pi^\star}(t)$ denote the solutions to~\eqref{eq:brownian_resource_model_two_mg} under the optimal bang-bang transport policy $\transport{1}{2}^\star$ defined in~\eqref{eq:optimal_policy_bang_bang}, namely,
\begin{equation}\label{eq:brownian_soc_two_mg_under_optimal_policy}
    \begin{aligned}
        \textstyle X_1^{\pi^\star}(t) &\mspace{-4mu}=\mspace{-4mu} \textstyle \frac{1}{2} \resourceCapacity^{(\infty)} \mspace{-4mu}+\mspace{-4mu} \sigma W_1(t) \mspace{-4mu}-\mspace{-4mu} \B\int_0^t \transport{1}{2}^\star(s)\text{d}s, \\
        \textstyle X_2^{\pi^\star}(t) &\mspace{-4mu}=\mspace{-4mu} \textstyle \frac{1}{2} \resourceCapacity^{(\infty)} \mspace{-4mu}+\mspace{-4mu} \sigma W_2(t) \mspace{-4mu}+\mspace{-4mu} \B\int_0^t \transport{1}{2}^\star(s)\text{d}s.
    \end{aligned}
\end{equation}
Let $X_{i,T_f}^{\pi^\star sup} := \sup_{0\leq t \leq T_f}X_i^{\pi^\star}(t), X_{i,T_f}^{\pi^\star inf} := \inf_{0 \leq t \leq T_f} $ $ X_i^{\pi^\star}(t), i \in \{1,2\}$, and the following quantities for the resource capacity $\resourceCapacity^{(\infty)}$:
\begin{equation}\label{eq:set_notation}
    \begin{aligned}
    \textstyle A_1 &:= \textstyle \{X_{1,T_f}^{\pi^\star sup} \geq \resourceCapacity^{(\infty)}\},\ \textstyle A_2 := \textstyle \{X_{1,T_f}^{\pi^\star inf}\leq 0\} \\
    \textstyle A_3 &:= \textstyle \{X_{2,T_f}^{\pi^\star sup} \geq \resourceCapacity^{(\infty)}\},\ \textstyle A_4 := \textstyle \{X_{2,T_f}^{\pi^\star inf}\leq 0\}.
\end{aligned}
\end{equation}
Then, the chance-constraint~\eqref{eq:chance_constraint_multi_mg_equivalent} can be written as
\begin{align}\label{eq:two_mg_chance_constraint_equivalent_n_infinity}
    \textstyle \mathbb{P}[A_1\cup A_2 \cup A_3\cup A_4] \leq \delta.
\end{align}

A smaller value of $\delta$ imposes a stricter reliability requirement by reducing the admissible probability of surplus or curtailment. We show that for $\xi<-\tfrac{1}{2}$, the chance constraint~\eqref{eq:two_mg_chance_constraint_equivalent_n_infinity} cannot be satisfied for sufficiently small $\delta$ over large horizon $T_f$. Consequently, maintaining the storage level $\resourceCapacity^{(\infty)}$ for all reliability thresholds requires that the transport capacity does not decay faster than order $1/\sqrt{T_f}$. This necessary condition is formalized in the following theorem.
\begin{theorem}\label{thm:lower_bound}
    Let $X_1^{\pi^\star}(t)$ and $X_2^{\pi^\star}(t)$ be as in~\eqref{eq:brownian_soc_two_mg_under_optimal_policy} with the resource capacity $\resourceCapacity^{(\infty)}$ and $A_1,A_2,A_3,A_4$ be as defined in~\eqref{eq:brownian_soc_two_mg_under_optimal_policy}. Let the transport capacity be, $\transportCapacity = c_p T_f^{\xi}$, where $\xi < -\frac{1}{2}$, and $c_p$ is a positive constant independent of $T_f$. There exists a $\delta^* \in (0,1)$, such that for all $0 < \delta \leq \delta^*$, $\lim_{T_f \to \infty} \mathbb{P}[A_1\cup A_2 \cup A_3\cup A_4] > \delta$.
\end{theorem}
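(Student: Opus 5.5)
Under the scaling $\transportCapacity=c_pT_f^{\xi}$ with $\xi<-\tfrac12$, I will show that the transport contribution vanishes after Brownian rescaling. The limiting problem is then the uncoupled case, which requires capacity $\resourceCapacity^{(0)}>\resourceCapacity^{(\infty)}$, so the failure probability stays bounded away from small $\delta$.

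Rescale time by $t=T_fs$, $s\in[0,1]$, and set $Y_i(s):=X_i^{\pi^\star}(T_fs)/\sqrt{T_f}$. By Brownian scaling, $\widetilde W_i(s):=W_i(T_fs)/\sqrt{T_f}$ are independent standard Wiener processes. Since $\resourceCapacity^{(\infty)}/\sqrt{T_f}=2\sigma\sqrt{\ln(2/\delta)}=:n_\delta$ does not depend on $T_f$, we get
\[
Y_1(s)=\tfrac12 n_\delta+\sigma\widetilde W_1(s)-\Delta(s),\qquad Y_2(s)=\tfrac12 n_\delta+\sigma\widetilde W_2(s)+\Delta(s),
\]
where $\Delta(s):=T_f^{-1/2}\int_0^{T_fs}\transport{1}{2}^\star(r)\,\mathrm{d}r$. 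Because $|\transport{1}{2}^\star|\le\transportCapacity$, we have $\sup_{s\in[0,1]}|\Delta(s)|\le \transportCapacity\sqrt{T_f}=c_pT_f^{\xi+1/2}=:\epsilon_{T_f}\to0$. This bound is deterministic, so the details of the bang-bang feedback (and the well-posedness of the SDE) play no role beyond the admissibility of the policy. The event $A_1\cup A_2\cup A_3\cup A_4$ is exactly the event that some $Y_i$ exits $(0,n_\delta)$ on $[0,1]$. Hence it contains the event $E_{T_f}$ that some $\tfrac12 n_\delta+\sigma\widetilde W_i$ exits the shrunken interval $(\epsilon_{T_f},n_\delta-\epsilon_{T_f})$ on $[0,1]$. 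Therefore $\mathbb{P}[A_1\cup\dots\cup A_4]\ge\mathbb{P}[E_{T_f}]$.

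The law of $E_{T_f}$ depends on $T_f$ only through $\epsilon_{T_f}$, since the pair $(\widetilde W_1,\widetilde W_2)$ has a fixed law. As $\epsilon\downarrow0$, the events increase to the event that some $\tfrac12 n_\delta+\sigma\widetilde W_i$ exits the open interval $(0,n_\delta)$, apart from a boundary-touching set of probability zero. By monotone continuity, $\liminf_{T_f\to\infty}\mathbb{P}[A_1\cup\dots\cup A_4]\ge p(\delta)$. Here $p(\delta):=1-q(\delta)^2$, where $q(\delta)$ is the probability that a single Brownian motion $\tfrac12 n_\delta+\sigma\widetilde W$ stays in $(0,n_\delta)$ on $[0,1]$; the independence of $\widetilde W_1,\widetilde W_2$ gives the square. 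If one wants a genuine limit rather than a liminf, I would state it that way, or argue convergence from the upper bound $\epsilon$-enlargement in the same manner.

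It remains to show that $p(\delta)>\delta$ for all small $\delta$. This is the main obstacle, since the single-process result of~\cite{dey2024guaranteeingACC} only supplies a sufficient capacity, not an exact exit probability. I would use the classical two-sided exit series (or the reflection-principle lower bound from the leading term). With half-width $h=n_\delta/2=\sigma\sqrt{\ln(2/\delta)}$, the leading term is $1-q(\delta)\approx 2\cdot 2\bar\Phi(h/\sigma)=4\bar\Phi(\sqrt{\ln(2/\delta)})$, and the corrections are of higher exponential order. Gaussian tail asymptotics give $\bar\Phi(\sqrt{L})\sim e^{-L/2}/\sqrt{2\pi L}$ with $L=\ln(2/\delta)$, so $1-q\approx 4\sqrt{\delta/2}/\sqrt{2\pi L}$. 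This is of order $\sqrt{\delta}/\sqrt{\ln(1/\delta)}$, which is much larger than $\delta$ as $\delta\to0$. Hence $p(\delta)\ge 1-q(\delta)>\delta$ for all $\delta\le\delta^*$ with some $\delta^*\in(0,1)$.

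To make this rigorous, I would use the simple lower bound that exiting the interval is at least as likely as $\sigma\widetilde W(1)>h$, namely $1-q\ge\bar\Phi(\sqrt{L})$, together with the Mills-ratio lower bound $\bar\Phi(x)\ge \frac{x}{1+x^2}\phi(x)$. This yields $1-q\ge c\sqrt{\delta}/\sqrt{\ln(2/\delta)}$. The inequality $c\sqrt{\delta}/\sqrt{\ln(2/\delta)}>\delta$ holds for all sufficiently small $\delta$, which fixes $\delta^*$.
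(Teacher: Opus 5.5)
Your strategy works, but one inclusion is stated backwards. Write $Z_i:=\tfrac12 n_\delta+\sigma\widetilde W_i$ and $\epsilon_{T_f}=c_pT_f^{\xi+1/2}$, so that $|Y_i-Z_i|\le\epsilon_{T_f}$. If $Z_i$ only exits the shrunken interval $(\epsilon_{T_f},n_\delta-\epsilon_{T_f})$, you get merely $\sup Y_i\ge n_\delta-2\epsilon_{T_f}$. That is not a violation, because the transport term may be pushing $Y_i$ away from the boundary. The correct sandwich is
\[
\{\text{some } Z_i \text{ exits } (-\epsilon_{T_f},\,n_\delta+\epsilon_{T_f})\}\subseteq A_1\cup A_2\cup A_3\cup A_4\subseteq\{\text{some } Z_i \text{ exits } (\epsilon_{T_f},\,n_\delta-\epsilon_{T_f})\}.
\]
Your continuity step describes events that increase as $\epsilon\downarrow 0$, up to the null event $\{\sup Z_i=n_\delta\}\cup\{\inf Z_i=0\}$. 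That is exactly the behaviour of the \emph{enlarged}-interval events, so this is a swap of labels, not a missing idea. Use the enlargement for the lower bound and the shrinkage for the upper bound. The shrunken events decrease exactly to the closed exit event, so both sides converge to $1-q(\delta)^2$ and you obtain a genuine limit.

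With that fix, your route differs from the paper's. The paper bounds only $\mathbb{P}[A_1]$, using the pathwise comparison $X_1^{\pi^\star}(t)\ge \tfrac12 N^{(\infty)}+\sigma W_1(t)-Ut$. It then applies Girsanov to Brownian motion with drift $-U/\sigma$. For every $T_f$ this gives the explicit bound $e^{-UN^{(\infty)}/(2\sigma^2)-U^2T_f/(2\sigma^2)}$ times a Mills-ratio factor. The prefactor tends to $1$ because $UN^{(\infty)}\propto T_f^{\xi+1/2}$ and $U^2T_f\propto T_f^{2\xi+1}$.

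You replace the change of measure by diffusive rescaling, the deterministic perturbation bound $UT_f/\sqrt{T_f}\to 0$, and continuity of exit probabilities. Both arguments rest on the same fact: the cumulative transport $UT_f$ is $o(\sqrt{T_f})$. The paper's version is non-asymptotic, quantifies the rate, and uses a single node, so it needs no independence. Yours identifies the limit exactly as $1-q(\delta)^2$, the failure probability of two \emph{uncoupled} nodes at capacity $N^{(\infty)}$. This shows that $\xi<-\tfrac12$ is asymptotically no better than no transport. It also proves the limit in the theorem exists, whereas the paper's argument only bounds the $\liminf$ from below.

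Your closing step $1-q(\delta)\ge\bar\Phi(\sqrt{\ln(2/\delta)})\ge c\sqrt{\delta}/\sqrt{\ln(2/\delta)}>\delta$ is essentially the paper's final Mills-ratio computation. The paper keeps the reflection factor $2$, which is immaterial.
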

\begin{proof}
    Please see Appendix~\ref{appendix:lower_bound}.
\end{proof}
Having established the necessary lower bound on the decay rate of the transport capacity, we now show that this rate is essentially sufficient. Specifically, if $\transportCapacity=c_pT_f^{\xi}$ with $\xi\in(-\tfrac{1}{2},0)$, then the chance constraint~\eqref{eq:two_mg_chance_constraint_equivalent_n_infinity} is satisfied for any $\delta\in(0,1)$ over large $T_f$, while maintaining the storage level $\resourceCapacity^{(\infty)}$ at each node. This implies that the transport capacity need not remain constant with growing horizon; it may decay with $T_f$, provided the decay is slower than $1/\sqrt{T_f}$. Hence, $\transportCapacity=\Theta(T_f^{-1/2})$, (Big Theta notation~\cite{knuth1976big}), emerges as the critical threshold separating feasibility and infeasibility of full-pooling performance with finite transport capacity. The result is formalized in the following theorem.
\begin{theorem}\label{thm:upper_bound}
    Let $X_1^{\pi^\star}(t)$ and $X_2^{\pi^\star}(t)$ be as in~\eqref{eq:brownian_soc_two_mg_under_optimal_policy} with the resource capacity $\resourceCapacity^{(\infty)}$ and $A_1,A_2,A_3,A_4$ be as defined in~\eqref{eq:brownian_soc_two_mg_under_optimal_policy}. Let the transport capacity be $\transportCapacity = c_p T_f^{-\frac{1}{2}+\epsilon}$, where $0 < \epsilon < \frac{1}{2}$, and $c_p$ is a positive constant independent of $T_f$. Given $\delta \in (0,1)$ we have,
    \begin{align*}
        \textstyle \lim_{T_f \to \infty} \mathbb{P}[A_1 \cup A_2 \cup A_3 \cup A_4] \leq \delta.
    \end{align*}
\end{theorem}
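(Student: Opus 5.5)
Change coordinates to the aggregate and the imbalance. Set $S(t):=X_1^{\pi^\star}(t)+X_2^{\pi^\star}(t)=\resourceCapacity^{(\infty)}+\sigma(W_1(t)+W_2(t))$ and $D(t):=X_1^{\pi^\star}(t)-X_2^{\pi^\star}(t)=\sigma(W_1(t)-W_2(t))-2\int_0^t\transport{1}{2}^\star(s)\,\mathrm{d}s$. Then $S$ and $D$ are driven by the independent Brownian motions $B_S:=(W_1+W_2)/\sqrt{2}$ and $B_D:=(W_1-W_2)/\sqrt{2}$. Under~\eqref{eq:optimal_policy_bang_bang}, $D$ solves $\mathrm{d}D=\sqrt{2}\sigma\,\mathrm{d}B_D-2\transportCapacity\sgn(D)\,\mathrm{d}t$ with $D(0)=0$. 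This is the bang-bang drift process of~\cite{shreve1981reflected}, and it is a pathwise functional of $B_D$ alone, so it is independent of $S$. Writing $X_{1,2}=(S\pm D)/2$, the failure event $A_1\cup\dots\cup A_4$ equals
\[
F:=\Bigl\{\textstyle\sup_{t\le T_f}\bigl(S(t)+|D(t)|\bigr)\ge 2\resourceCapacity^{(\infty)}\Bigr\}\cup\Bigl\{\textstyle\inf_{t\le T_f}\bigl(S(t)-|D(t)|\bigr)\le 0\Bigr\}.
\]

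Next, fix $\eta>0$ and let $G_\eta:=\{\sup_{t\le T_f}|D(t)|\le \eta\sqrt{T_f}\}$. On $G_\eta$, the event $F$ forces $S$ to leave $(\eta\sqrt{T_f},\,2\resourceCapacity^{(\infty)}-\eta\sqrt{T_f})$. Hence $\mathbb{P}[F]\le \mathbb{P}[S \text{ exits that interval}]+\mathbb{P}[G_\eta^c]$. Rescale time by $T_f$: $S(sT_f)/\sqrt{T_f}=2\sigma\sqrt{\ln(2/\delta)}+\sqrt{2}\sigma\tilde B(s)$ on $s\in[0,1]$, with $\tilde B$ standard, so the first term does not depend on $T_f$. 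Call it $p(\eta)$. At $\eta=0$ it is the exit probability of $\sqrt2\sigma\tilde B$ from the symmetric interval of half-width $2\sigma\sqrt{\ln(2/\delta)}$, and by the single-process result used to define $\resourceCapacity^{(\infty)}$ this is at most $\delta$. It then suffices to show two things: $\limsup_{T_f}\mathbb{P}[G_\eta^c]=0$ for every $\eta>0$, and $p(\eta)\to p(0)$ as $\eta\downarrow0$. Continuity holds by the continuity of the exit-probability series (or of the law of the running max/min, which has no atoms). Letting $T_f\to\infty$ and then $\eta\downarrow 0$ gives $\lim\mathbb{P}[F]\le p(0)\le\delta$.

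\textbf{Main obstacle.} The crux is the estimate $\mathbb{P}[\sup_{t\le T_f}|D(t)|>\eta\sqrt{T_f}]\to0$ when $\transportCapacity=c_pT_f^{-1/2+\epsilon}$. I would handle it by rescaling $\tilde D(s):=D(sT_f)/\sqrt{T_f}$, which satisfies $\mathrm{d}\tilde D=\sqrt{2}\sigma\,\mathrm{d}\tilde B_D-2c_pT_f^{\epsilon}\sgn(\tilde D)\,\mathrm{d}s$. The drift magnitude $\mu:=2c_pT_f^{\epsilon}\to\infty$, so $\tilde D$ is a strongly mean-reverting process on $[0,1]$.

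To bound its supremum, $|\tilde D|$ behaves as a reflected Brownian motion with drift $-\mu$ toward zero. The first option is a union bound over excursions. Each excursion from $0$ to level $\eta$ succeeds with probability $e^{-\mu\eta/\sigma^2}$, by the scale function $\exp(\mu x/\sigma^2)$ of drift $-\mu$ and diffusion $2\sigma^2$. The number of attempts by time $1$ is polynomial in $\mu$. Making this rigorous, I would use the stationary/transition density from~\cite{shreve1981reflected}. A cleaner alternative is a Lyapunov/supermartingale argument on $\exp(\theta|\tilde D|)$ with $\theta=\mu/(2\sigma^2)$, combined with Doob's inequality on a grid. Either way I expect a bound of order $\mathrm{poly}(\mu)\,e^{-c\mu\eta}\to0$. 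This is exactly where $\epsilon>0$ is used: at $\epsilon=0$, $\mu$ stays bounded and the bound fails.
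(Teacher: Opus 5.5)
Your reduction is correct and follows the paper's route. The sum/difference change of variables, the rewriting of $A_1\cup A_2\cup A_3\cup A_4$ through $S\pm|D|$, and the split on $\{\sup_{t\le T_f}|D(t)|\le\beta\}$ reproduce Proposition~\ref{prop:conservative_sol_two_mg}. Your bound $p(0)\le\delta$ is the paper's Gaussian hitting-time estimate. The only structural difference is the threshold. You take $\beta=\eta\sqrt{T_f}$ and then send $\eta\downarrow0$; this is legitimate, since the exit events decrease to the $\eta=0$ event and so $p(\eta)\downarrow p(0)$ by continuity from above. The paper takes $\beta=T_f^{1/2-\epsilon/2}=o(\sqrt{T_f})$, so that $2e^{-(\resourceCapacity^{(\infty)}-\beta)^2/(2\sigma_c^2T_f)}\to\delta$ directly. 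Your choice only requires $\sup_{t\le T_f}|D(t)|=o_P(\sqrt{T_f})$, whereas the paper's requires a rate. (The independence of $S$ and $D$ is true but never used.)

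The gap is the step you flag yourself: you never prove $\mathbb{P}[\sup_{s\le1}|\tilde D_s|\ge\eta]\to0$, and neither sketch works as written.

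\emph{Excursion route.} An excursion count cannot be based at level $0$, because a diffusion started at $0$ makes infinitely many excursions from $0$ in $[0,h]$ for every $h>0$. Attempts must be counted as climbs from $0$ to an intermediate level, and you then need a high-probability polynomial bound on the number of such climbs. That is exactly what the paper supplies: $\hat\beta=T_f^{\gamma}$, Lemma~\ref{lem:upper_bound_1} with Corollary~\ref{cor:num_epochs} (epoch lengths exceed $T$ with high probability), and Lemma~\ref{lem:upper_bound_2}.

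\emph{Lyapunov route.} The function $e^{\theta|x|}$ is not admissible. By Tanaka's formula, $e^{\theta|\tilde D_s|}$ picks up a term $+\theta\,\mathrm{d}L^0_s$ (local time at $0$) from the kink at $0$, so it is not a supermartingale. Use instead the smooth function $V(x)=\cosh(\theta x)$ with $\theta=\mu/(2\sigma^2)$. For $\mathcal{L}=\sigma^2\partial_{xx}-\mu\sgn(x)\partial_x$, the inequality $\sinh y\ge\cosh y-1$ for $y\ge0$ gives $\mathcal{L}V\le-\tfrac{\mu\theta}{2}V+\mu\theta\le\mu\theta$. Hence $V(\tilde D_s)+\mu\theta(1-s)$ is a nonnegative supermartingale on $[0,1]$ started at $1+\mu\theta$. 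The maximal inequality for nonnegative supermartingales then yields
\[
\mathbb{P}\Bigl[\sup_{0\le s\le 1}|\tilde D_s|\ge\eta\Bigr]\le\frac{1+\mu\theta}{\cosh(\theta\eta)}\le 2\Bigl(1+\frac{\mu^2}{2\sigma^2}\Bigr)e^{-\mu\eta/(2\sigma^2)}\longrightarrow0,\qquad \mu=2c_pT_f^{\epsilon}\to\infty,
\]
with no grid or Doob argument needed. With this step inserted your proof is complete. It also replaces the paper's whole proof of Lemma~\ref{lem:difference_process_upper_bound} (Lemmas~\ref{lem:upper_bound_1}, \ref{lem:upper_bound_2} and Corollary~\ref{cor:num_epochs}) with a single supermartingale estimate.
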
 
Theorem~\ref{thm:upper_bound} has important infrastructure-planning implications. While the minimum storage requirement $\resourceCapacity^{(\infty)}=2\sigma\sqrt{\ln(2/\delta)}\sqrt{T_f}$ grows with the operating horizon, the transport capacity needed to maintain this storage level can decrease at a rate approaching $1/\sqrt{T_f}$. Hence, longer horizons require greater storage but less transport capacity, highlighting a fundamental storage-transport trade-off in reliable stochastic resource systems.

We provide the proof of Theorem~\ref{thm:upper_bound} utilizing two intermediate: Proposition~\ref{prop:conservative_sol_two_mg} and Lemma~\ref{lem:difference_process_upper_bound}. We first present these results and then use them to prove Theorem~\ref{thm:upper_bound}. Let $X_{c}^{\pi^\star}(t):= X_1^{\pi^\star}(t)+X_2^{\pi^\star}(t)$ and $X_d^{\pi^\star}(t) := X_2^{\pi^\star}(t)-X_1^{\pi^\star}(t), t\in [0,T_f]$. Then, from~\eqref{eq:brownian_soc_two_mg_under_optimal_policy}, we have the following:
\begin{equation}\label{eq:brownian_soc_two_mg_1}
    \begin{aligned}
        \textstyle X_c^{\pi^\star}(t) & \textstyle = \resourceCapacity^{(\infty)}\mspace{-4mu}+\mspace{-4mu} \sigma (W_1(t) + W_2(t)) \\ 
        \textstyle X_d^{\pi^\star}(t) & \textstyle = \sigma (W_2(t)-W_1(t))+2\B\int_0^t \transport{1}{2}^\star(s)\text{d}s.
\end{aligned}
\end{equation}
We have $W_1(t)+W_2(t)\sim \sqrt{2}W_c(t)$, where $W_c(t)$ is a Wiener process~\cite{karatzas2012brownian}. Hence, $X_c^{\pi^\star}(t)$ is given as
\begin{align}\label{eq:sum_equation}
    \textstyle X_c^{\pi^\star}=\resourceCapacity^{(\infty)}+\sigma_c W_c(t),
\end{align}
where $\sigma_c=\sqrt{2}\sigma$. Similarly, $W_2(t)-W_1(t)\sim \sqrt{2}W_d(t)$, where $W_d(t)$ is a Wiener process~\cite{karatzas2012brownian}. Further, $\transport{1}{2}^\star(t)$ in~\eqref{eq:optimal_policy_bang_bang}, can be written as
\begin{align}\label{eq:optimal_power_transfer_policy_sgn}
    \textstyle \transport{1}{2}^\star(t) = -\transportCapacity\text{sgn}(X_d^{\pi^\star}(t)).
\end{align}
Thus, $X_d^{\pi^\star}(t)$ takes the form, with $\sigma_d:=\sqrt{2}\sigma$,
\begin{align}\label{eq:difference_equation}
    \textstyle X_d^{\pi^\star}=\sigma_d W_d(t)-2\B\transportCapacity\int_{0}^{t}\text{sgn}(X_d^{\pi^\star}(s))\text{d}s.
\end{align}
Note that $X_c^{\pi^\star}(t)$ does not depend on the transport capacity $\transportCapacity$, while $X_d^{\pi^\star}(t)$ captures the effect of transport through the bang-bang policy. Since both $X_c^{\pi^\star}(t)$ and $X_d^{\pi^\star}(t)$ are one-dimensional processes, chance constraint~\eqref{eq:two_mg_chance_constraint_equivalent_n_infinity} can be analyzed tractably through their behavior. To this end, define
\begin{align}\label{eq:sum_equation_sup_inf}
   \hspace{-0.35in} X_{c,T_f}^{\pi^\star sup} := \sup_{t\in[0,T_f]}X_c^{\pi^\star}(t),  X_{c,T_f}^{\pi^\star inf} := \inf_{t\in [0,T_f]}X_c^{\pi^\star}(t).
\end{align}

In the next proposition, we first show how the probability $\mathbb{P}[A_1 \cup A_2 \cup A_3 \cup A_4]$ can be upper bounded in terms of probabilities involving $X_{c,T_f}^{\pi^\star sup}, X_{c,T_f}^{\pi^\star inf}$ and $X_d^{\pi^\star}(t)$.
\begin{proposition}\label{prop:conservative_sol_two_mg}
    Let $ X_c^{\pi^\star}, X_d^{\pi^\star}(t)$, and $X_{c,T_f}^{\pi^\star sup}, X_{c,T_f}^{\pi^\star inf}$ be as  in~\eqref{eq:sum_equation},~\eqref{eq:difference_equation} and~\eqref{eq:sum_equation_sup_inf} respectively, and  $A_1,A_2,A_3,A_4$ be the same as in~\eqref{eq:set_notation}. For $0 < \beta < \resourceCapacity^{(\infty)}$, we have
    \begin{align*}
    \textstyle & \mathbb{P}[A_1 \cup A_2 \cup A_3 \cup A_4]\\
     & \textstyle \leq \mathbb{P}[X_{c,T_f}^{\pi^\star sup} \geq 2\resourceCapacity^{(\infty)}-\beta]\textstyle  + \mathbb{P}[X_{c,T_f}^{\pi^\star inf} \leq \beta]\nonumber\\
    & \hspace{0.1in} \textstyle + \mathbb{P}[\sup_{0\leq t \leq T_f}|X_{d}^{\pi^\star}(t)| \geq \beta].
\end{align*}
\end{proposition}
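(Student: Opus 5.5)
The plan is to express each node process through the sum and difference processes, and then prove an event inclusion. Since $X_1^{\pi^\star}=\tfrac12(X_c^{\pi^\star}-X_d^{\pi^\star})$ and $X_2^{\pi^\star}=\tfrac12(X_c^{\pi^\star}+X_d^{\pi^\star})$, we have for every $t\in[0,T_f]$
\begin{align*}
\max\{X_1^{\pi^\star}(t),X_2^{\pi^\star}(t)\} &= \tfrac12\bigl(X_c^{\pi^\star}(t)+|X_d^{\pi^\star}(t)|\bigr),\\
\min\{X_1^{\pi^\star}(t),X_2^{\pi^\star}(t)\} &= \tfrac12\bigl(X_c^{\pi^\star}(t)-|X_d^{\pi^\star}(t)|\bigr).
\end{align*}
Define the complementary good events $G_1:=\{X_{c,T_f}^{\pi^\star sup}<2\resourceCapacity^{(\infty)}-\beta\}$, $G_2:=\{X_{c,T_f}^{\pi^\star inf}>\beta\}$ and $G_3:=\{\sup_{0\le t\le T_f}|X_d^{\pi^\star}(t)|<\beta\}$. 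The goal is to show $G_1\cap G_2\cap G_3\subseteq (A_1\cup A_2\cup A_3\cup A_4)^c$.

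First, take an outcome in $G_1\cap G_2\cap G_3$. For every $t$, the maximum satisfies
\[
\max_i X_i^{\pi^\star}(t)<\tfrac12\bigl(2\resourceCapacity^{(\infty)}-\beta+\beta\bigr)=\resourceCapacity^{(\infty)},
\]
and the minimum satisfies $\min_i X_i^{\pi^\star}(t)>\tfrac12(\beta-\beta)=0$.

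One care point is that $A_1,\dots,A_4$ involve suprema and infima with non-strict inequalities. A pointwise strict bound does not by itself give a strict bound on the supremum. I would handle this using path continuity on the compact interval $[0,T_f]$: the sup and inf are attained, so $X_{i,T_f}^{\pi^\star sup}<\resourceCapacity^{(\infty)}$ and $X_{i,T_f}^{\pi^\star inf}>0$. The same continuity justifies the strict inequalities defining $G_1,G_2,G_3$. Continuity of $X_d^{\pi^\star}$ holds because it is a Brownian term plus a Lipschitz integral. An alternative is to bound directly: $\sup_t\max_i X_i\le \tfrac12(\sup_t X_c+\sup_t|X_d|)<\resourceCapacity^{(\infty)}$, which needs no attainment. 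Symmetrically, $\inf_t\min_i X_i\ge\tfrac12(\inf_t X_c-\sup_t|X_d|)>0$. This avoids the subtlety entirely, and it is the route I would take.

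Finally, taking complements gives
\[
A_1\cup A_2\cup A_3\cup A_4\subseteq G_1^c\cup G_2^c\cup G_3^c .
\]
Applying the union bound then yields exactly the three probabilities in the statement. The condition $0<\beta<\resourceCapacity^{(\infty)}$ only ensures the thresholds $\beta<2\resourceCapacity^{(\infty)}-\beta$ are consistent, so the bound is nontrivial. There is no real obstacle here. The only delicate point is the strict/non-strict inequality bookkeeping, which is resolved by bounding the sup of the half-sum by the half-sum of the sups.
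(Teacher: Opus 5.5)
Your proof is correct and is essentially the paper's argument: the paper splits on $A_d=\{\sup_{0\le t\le T_f}|X_d^{\pi^\star}(t)|<\beta\}$ and shows $(A_1\cup A_3)\cap A_d\subseteq\{X_{c,T_f}^{\pi^\star sup}\ge 2\resourceCapacity^{(\infty)}-\beta\}$ and $(A_2\cup A_4)\cap A_d\subseteq\{X_{c,T_f}^{\pi^\star inf}\le\beta\}$, which is the contrapositive of your inclusion $G_1\cap G_2\cap G_3\subseteq(A_1\cup A_2\cup A_3\cup A_4)^c$. Your max/min identity and the bound $\sup_t\max_i X_i^{\pi^\star}(t)\le\tfrac12\bigl(X_{c,T_f}^{\pi^\star sup}+\sup_t|X_d^{\pi^\star}(t)|\bigr)$ supply the set-algebra steps the paper omits, and they also take care of the strict-versus-non-strict inequalities.
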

\begin{proof}
    Please see Appendix~\ref{appendix:conservative_sol_two_mg}.
\end{proof}
Proposition~\ref{prop:conservative_sol_two_mg} is central to the proof of Theorem~\ref{thm:upper_bound}, as it decomposes the original two-resource chance constraint into two components: the aggregate process $X_c^{\pi^\star}$ and the imbalance process $X_d^{\pi^\star}$. The former governs whether the total available resource is sufficient, while the latter captures its distribution across nodes and determines whether imbalance can lead to node-level violations.

As stated in Theorem~\ref{thm:upper_bound}, let $\transportCapacity=c_pT_f^{-\frac{1}{2}+\epsilon}$ with $\epsilon\in(0,\frac{1}{2})$, and fix the storage capacity at the level  $\resourceCapacity^{(\infty)}=2\sigma\sqrt{\ln(2/\delta)}\sqrt{T_f}$. We choose the intermediate threshold $\beta=T_f^{\frac{1}{2}-\frac{\epsilon}{2}}$, which grows with $T_f$ but more slowly than $\sqrt{T_f}$. We first analyze $\mathbb{P}[\sup_{0\le t\le T_f}|X_d^{\pi^\star}(t)|\ge \beta]$, the probability that the imbalance process exceeds $\beta$ over the horizon. We show that under the bang-bang policy,
\[
\lim_{T_f\to\infty}
\mathbb{P}\Big[\sup_{0\le t\le T_f}|X_d^{\pi^\star}(t)|\ge\beta\Big]=0.
\]
Hence, transport capacity scaling as $c_pT_f^{-\frac{1}{2}+\epsilon}$ is sufficient to keep the imbalance between the nodes small relative to the storage margin with high probability as $T_f$ grows. We present this analysis next. Note that for any $\beta$
\begin{align}\label{eq:diff_process_mod_inequality}
    & \hspace{-0.3in} \textstyle \mathbb{P}[\sup_{0\leq t \leq T_f}|X_{d}^{\pi^\star}(t)| \geq \beta] = \mathbb{P}\big[\{\sup_{0\leq t \leq T_f} X_{d}^{\pi^\star}(t) \geq \beta\} \nonumber \\
    & \hspace{-0.3in} \textstyle \cup \{\inf_{0\leq t \leq T_f} X_{d}^{\pi^\star}(t) \leq -\beta\}\big] \leq \mathbb{P}[\sup_{0\leq t \leq T_f} X_{d}^{\pi^\star}(t) \geq \beta] \nonumber\\
    & \hspace{0.9in} \textstyle  + \mathbb{P}[\inf_{0\leq t \leq T_f} X_{d}^{\pi^\star}(t) \leq -\beta],
\end{align}
where the last inequality follows from the union bound on probabilities. We show that as $T_f \to \infty$, both the probabilities on the right-hand side of~\eqref{eq:diff_process_mod_inequality} go to zero and hence, $\lim_{T_f \to \infty}\mathbb{P}[\sup_{0\leq t \leq T_f}|X_{d}^{\pi^\star}(t)| \geq \beta]=0$. We present the following Lemma.
\begin{lemma}\label{lem:difference_process_upper_bound}
    Let $X_d^{\pi^\star}(t)$ be as defined in~\eqref{eq:difference_equation}, and 
    let $\epsilon, \gamma$ be constants such that $0 < \epsilon < \frac{1}{2} - \gamma$. Let $\beta := T_f^{(\frac{1}{2}-\frac{\epsilon}{2})}$ and the transport capacity, $\transportCapacity = c_p T_f^{-\frac{1}{2}+\epsilon}$. We have 
    \begin{align*}
       &\hspace{-0.35in} \mathbb{P}[\sup_{0\leq t \leq T_f} X_{d}^{\pi^\star}(t) \geq \beta] \leq [8\B c_pT_f^{1-2\gamma+(\frac{\epsilon}{2})}+1]e^{-\frac{2\B c_p}{\sigma_d^2}[T_f^{\frac{\epsilon}{2}}-T_f^{-\frac{1}{2}+\gamma+\epsilon}]} \\
       &\hspace{-0.35in} \textstyle \mathbb{P}[\inf_{0\leq t \leq T_f} X_{d}^{\pi^\star}(t) \leq -\beta] \leq \\
       & \hspace{0.75in} \textstyle (8\B c_pT_f^{1-2\gamma+(\frac{\epsilon}{2})}+1)e^{-\frac{2\B c_p}{\sigma_d^2}(T_f^{\frac{\epsilon}{2}}-T_f^{-\frac{1}{2}+\gamma+\epsilon})}
    \end{align*}
In addition, we can conclude that 
    \begin{align*}
        \textstyle & \textstyle \lim_{T_f \to \infty}\mathbb{P}[\sup_{0\leq t \leq T_f} X_{d}^{\pi^\star}(t) \geq \beta]=0, \text{ and,}\\
        \textstyle & \textstyle \lim_{T_f \to \infty}\mathbb{P}[\inf_{0\leq t \leq T_f} X_{d}^{\pi^\star}(t) \leq -\beta]=0.
    \end{align*}
\end{lemma}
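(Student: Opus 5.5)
By the symmetry of \eqref{eq:difference_equation} it suffices to treat the supremum. Since $\text{sgn}$ is odd and $W_d$ is equal in law to $-W_d$, the process $-X_d^{\pi^\star}$ solves the same equation. Hence $\mathbb{P}[\inf_{0\le t\le T_f}X_d^{\pi^\star}(t)\le-\beta]=\mathbb{P}[\sup_{0\le t\le T_f}X_d^{\pi^\star}(t)\ge\beta]$, and the second bound follows from the first.

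For the supremum I would use an excursion (upcrossing) decomposition around an intermediate level $a:=T_f^{\gamma}$, where $0<a<\beta$ because $\gamma<\tfrac12-\epsilon<\tfrac12-\tfrac{\epsilon}{2}$. Define stopping times $\tau_0=0$, $\rho_k:=\inf\{t\ge\tau_{k-1}:X_d^{\pi^\star}(t)=a\}$ and $\tau_k:=\inf\{t\ge\rho_k:X_d^{\pi^\star}(t)=0\}$. Since $X_d^{\pi^\star}(0)=0<a<\beta$, the event $\{\sup_{t\le T_f}X_d^{\pi^\star}(t)\ge\beta\}$ is contained in $\bigcup_{k\ge1}\{\rho_k\le T_f,\ \text{the path started at }\rho_k\text{ hits }\beta\text{ before }0\}$. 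On $(0,\infty)$ the dynamics are exactly Brownian motion with constant drift $-2\transportCapacity$ and variance $\sigma_d^2$. By the strong Markov property and the exponential martingale $\exp(4\transportCapacity X/\sigma_d^2)$ (gambler's ruin), each such event has conditional probability at most $e^{-4\transportCapacity(\beta-a)/\sigma_d^2}\le e^{-2\transportCapacity(\beta-a)/\sigma_d^2}$. Substituting $\transportCapacity=c_pT_f^{-\frac12+\epsilon}$ gives $\transportCapacity\beta=c_pT_f^{\epsilon/2}$ and $\transportCapacity a=c_pT_f^{-\frac12+\gamma+\epsilon}$. This is exactly the exponent $-\frac{2c_p}{\sigma_d^2}[T_f^{\epsilon/2}-T_f^{-\frac12+\gamma+\epsilon}]$ in the statement.

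Summing over $k$ then yields
\begin{align*}
\mathbb{P}\big[\sup_{0\le t\le T_f}X_d^{\pi^\star}(t)\ge\beta\big]\le \mathbb{E}[K_{T_f}]\,e^{-\frac{2c_p}{\sigma_d^2}[T_f^{\epsilon/2}-T_f^{-\frac12+\gamma+\epsilon}]},
\end{align*}
where $K_{T_f}:=\#\{k:\rho_k\le T_f\}$. The remaining task, which I expect to be the main obstacle, is to show $\mathbb{E}[K_{T_f}]\le 8c_pT_f^{1-2\gamma+\frac{\epsilon}{2}}+1$, or at least a bound polynomial in $T_f$.

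My first attempt is a renewal/Wald argument. The cycles $[\tau_{k-1},\tau_k]$ are i.i.d. by the strong Markov property, since each starts at $0$. Each cycle contains a passage from $0$ to $a$, whose duration dominates in law the time for a driftless Brownian motion with variance $\sigma_d^2$ to leave $(-a,a)$. The drift only points toward $0$, so a comparison argument applies; alternatively I would bound the $0\to a$ passage time from below by the downward ($a\to0$) leg, which has mean $a/(2\transportCapacity)$. Wald's identity gives $\mathbb{E}[\tau_{K}]\le T_f+\mathbb{E}[\text{one cycle}]$, hence $\mathbb{E}[K_{T_f}]\le 1+T_f/\mathbb{E}[\text{cycle length}]$. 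The cycle length is of order at least $a^2/\sigma_d^2=T_f^{2\gamma}/\sigma_d^2$, which produces the $T_f^{1-2\gamma}$ scaling; the remaining constant and $T_f^{\epsilon/2}$ factors in $8c_pT_f^{1-2\gamma+\epsilon/2}$ should come from how the lower bound on the cycle length is combined with the drift terms. If the exact constant proves awkward, I would instead bound $\mathbb{P}[K_{T_f}>m]$ via Chernoff on i.i.d. cycle lengths and union over $m$ up to a polynomial cutoff, since any polynomial prefactor suffices.

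Finally, $\tfrac{\epsilon}{2}>-\tfrac12+\gamma+\epsilon$ is equivalent to $\gamma<\tfrac12-\tfrac{\epsilon}{2}$, which holds since $\gamma<\tfrac12-\epsilon$. Therefore the exponent tends to $-\infty$ like $-\frac{2c_p}{\sigma_d^2}T_f^{\epsilon/2}$. It dominates the polynomial prefactor, so both probabilities tend to $0$ as $T_f\to\infty$.
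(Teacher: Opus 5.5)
Your skeleton is the paper's: an intermediate level $a=T_f^{\gamma}$, upcrossings $0\to a$, an escape estimate per upcrossing, and a count of upcrossings. Several of your pieces are sound and cleaner than the paper's:
\begin{itemize}
\item the symmetry reduction via uniqueness in law of $-X_d^{\pi^\star}$;
\item the gambler's-ruin bound for ``hit $\beta$ before $0$''. It is horizon-free, so there is no separate incomplete last epoch and no conditioning on $\{\hat\tau^{+}_{\hat\beta,i}\le T_f\}$, and it gives the sharper exponent $4U(\beta-a)/\sigma_d^2$;
\item the estimate $\mathbb{P}[\sup_{t\le T_f}X_d^{\pi^\star}(t)\ge\beta]\le\mathbb{E}[K_{T_f}]\,e^{-4U(\beta-a)/\sigma_d^2}$, obtained by conditioning on $\mathcal{F}_{\rho_k}$.
\end{itemize}
The problems are entirely in the counting step, where you depart from the paper.

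\textbf{First gap: the renewal step fails as written.} $K_{T_f}=\#\{k:\rho_k\le T_f\}$ is not a stopping time for the cycle sequence, since $\{K_{T_f}\ge k\}=\{\rho_k\le T_f\}$ depends on the first leg of cycle $k$. Even for $K'=\#\{k:\tau_{k-1}\le T_f\}\ge K_{T_f}$, where Wald does apply, $\mathbb{E}[\tau_{K'}]\le T_f+\mathbb{E}[C]$ is false in general, because the cycle straddling $T_f$ is size-biased. Here this matters. The return leg $a\to 0$ is inverse Gaussian with mean $a/(2U)$ and squared coefficient of variation $\sigma_d^2/(2Ua)\to\infty$. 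So the stationary mean residual life $\mathbb{E}[C^2]/(2\mathbb{E}[C])$ is of order $\sigma_d^2/U^2$, far above $\mathbb{E}[C]$. Your ``alternatively'' sentence is also garbled: the downward leg bounds the cycle length from below, not the $0\to a$ passage.

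A correct replacement uses only the comparison you already have. The times $D_j:=\rho_j-\tau_{j-1}$ are i.i.d. and stochastically dominate the exit time of $\sigma_d B$ from $(-a,a)$. Hence $\mathbb{E}[e^{-\lambda D_j}]\le 1/\cosh(a\sqrt{2\lambda}/\sigma_d)$, and $\mathbb{E}[K_{T_f}]=\sum_{m\ge1}\mathbb{P}[\rho_m\le T_f]\le\sum_{m\ge1}\min\{1,\,e^{\lambda T_f}\cosh(a\sqrt{2\lambda}/\sigma_d)^{-m}\}$. Taking $\lambda=\sigma_d^2/(2a^2)$ gives $\mathbb{E}[K_{T_f}]\le\sigma_d^2T_f^{1-2\gamma}/(2\ln\cosh 1)+3$. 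Truncating the cycle lengths at a fixed level before applying Wald also works. With this repair your argument proves both limit statements.

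\textbf{Second gap: you do not obtain the displayed inequality.} The factor $8c_pT_f^{1-2\gamma+\epsilon/2}$ will not come out of ``combining the cycle-length bound with the drift terms''. Write $E:=\frac{2c_p}{\sigma_d^2}(T_f^{\epsilon/2}-T_f^{-\frac12+\gamma+\epsilon})$. Your route naturally yields $(C_1\sigma_d^2T_f^{1-2\gamma}+C_2)e^{-2E}$. This is below the stated right-hand side for all large $T_f$, but it does not by itself give the stated bound for every $T_f$.

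In the paper the prefactor is an artifact of a tail-counting device, which is essentially your fallback of bounding $\mathbb{P}[K_{T_f}>m]$ carried out with a specific time scale:
\begin{itemize}
\item Comparing with $\sigma_d\tilde W_d(h)+2Uh$ and using Girsanov, an epoch reaches $\hat\beta$ within time $T$ with probability at most $\exp(2\hat\beta U/\sigma_d^2-\hat\beta^2/(2\sigma_d^2T))$.
\item The choice $T=\frac{1}{4c_p}T_f^{2\gamma-\epsilon/2}$ makes this exponent exactly $-E$.
\item A union bound over short epochs gives $\mathbb{P}[K>K_{\max}]\le K_{\max}e^{-E}$, with $K_{\max}=T_f/T=4c_pT_f^{1-2\gamma+\epsilon/2}$.
\item The escape events in at most $K_{\max}$ complete epochs plus one incomplete epoch add $(K_{\max}+1)e^{-E}$.
\end{itemize}
The total is $(2K_{\max}+1)e^{-E}=(8c_pT_f^{1-2\gamma+\epsilon/2}+1)e^{-E}$. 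To prove the lemma as stated you need this device; your gambler's-ruin estimate can replace the paper's escape bound inside it. Your expectation-based count, once repaired, yields only the limit statements.
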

\begin{proof}
    Please see Appendix~\ref{appendix:difference_process_upper_bound}.
\end{proof}
Lemma~\ref{lem:difference_process_upper_bound} shows that, under the transport capacity choice $\transportCapacity=c_pT_f^{-\frac{1}{2}+\epsilon}$ with $0<\epsilon<\frac{1}{2}$, the bang-bang policy increasingly suppresses the resource imbalance as the operating horizon grows. Specifically, for the threshold $\beta=T_f^{\frac{1}{2}-\frac{\epsilon}{2}}$, the probabilities that the imbalance process $X_d^{\pi^\star}(t)$ exceeds $\beta$ or falls below $-\beta$ over $[0,T_f]$ both vanish as $T_f\to\infty$. Therefore, for $\beta=T_f^{\frac{1}{2}-\frac{\epsilon}{2}}$, combining these bounds with~\eqref{eq:diff_process_mod_inequality}, we obtain
\begin{align}\label{eq:difference_process_zero}
    \textstyle \lim_{T_f \to \infty}\mathbb{P}[\sup_{0\leq t \leq T_f}|X_{d}^{\pi^\star}(t)| \geq \beta]=0.
\end{align}

We now present a proof sketch of Theorem~\ref{thm:upper_bound} (The formal proof of Theorem~\ref{thm:upper_bound} is given in Appendix~\ref{appendix:upper_bound}). For the choice, $\beta=T_f^{\frac{1}{2}-\frac{\epsilon}{2}}$, we first show that the aggregate-process satisfies
\[
\lim_{T_f \to \infty}
\left(
\mathbb{P} \left[X_{c,T_f}^{\pi^\star sup}\geq 2\resourceCapacity^{(\infty)}-\beta\right]
+
\mathbb{P}\left[X_{c,T_f}^{\pi^\star inf}\leq \beta\right]
\right)
\leq \delta .
\]
This establishes that the aggregate resource remains within the safety limits with the prescribed probability as the time horizon grows. Combined with~\eqref{eq:difference_process_zero}, which guarantees that the imbalance process remains below $\beta$ with probability approaching one, and Proposition~\ref{prop:conservative_sol_two_mg}, it follows that the chance constraint is satisfied with storage capacity $\resourceCapacity^{(\infty)}$ and transport capacity $\transportCapacity=c_pT_f^{-\frac{1}{2}+\epsilon}$, $0<\epsilon<\frac{1}{2}$, that is,
\[
\lim_{T_f\to\infty}
\mathbb{P}[A_1\cup A_2\cup A_3\cup A_4]\leq \delta.
\]
Hence, the transport capacity may decay at any rate slower than $1/\sqrt{T_f}$ while preserving the storage level $\resourceCapacity^{(\infty)}$ and satisfying the chance constraint. 

\section{Simulation}
We validate the chance constraint in Theorem~\ref{thm:upper_bound} via Monte Carlo simulation of~\eqref{eq:brownian_soc_two_mg_under_optimal_policy} under the bang-bang policy~\eqref{eq:optimal_policy_bang_bang}. We fix $\sigma=2$, $\delta=0.01$, and $c_p=100$, and consider multiple horizon lengths $T_f$. The storage capacity is parameterized as $N = K N^{(\infty)}$, $K \in \mathbb{R}_+$, while the transport capacity is $\transportCapacity = c_p T_f^{-1/2+\epsilon}$. The dynamics of $X_1^{\pi^\star}(t)$ and $X_2^{\pi^\star}(t)$ are simulated using a discrete-time approximation. For each $\epsilon$, we compute the minimal feasible $K$ satisfying~\eqref{eq:two_mg_chance_constraint_equivalent_n_infinity}, estimated over $10^4$ sample paths. As shown in Fig.~\ref{fig:K_vs_epsilon_Tf10000}, a clear storage–transport trade-off is observed for fixed $T_f$: increasing transport capacity (larger $\epsilon$) reduces the required storage (smaller $K$), and vice versa. Fig.~\ref{fig:K_vs_epsilon_Tf_convergence} further shows that as $T_f$ increases, this trade-off goes towards a critical transition with the feasible region contracting toward $K \to 1$ and $\epsilon \to 0$, consistent with Theorem~\ref{thm:upper_bound}.
\begin{figure}
     \centering
     \includegraphics[scale=0.4,trim={0.3cm 0.4cm 0.0cm 0.2cm},clip]{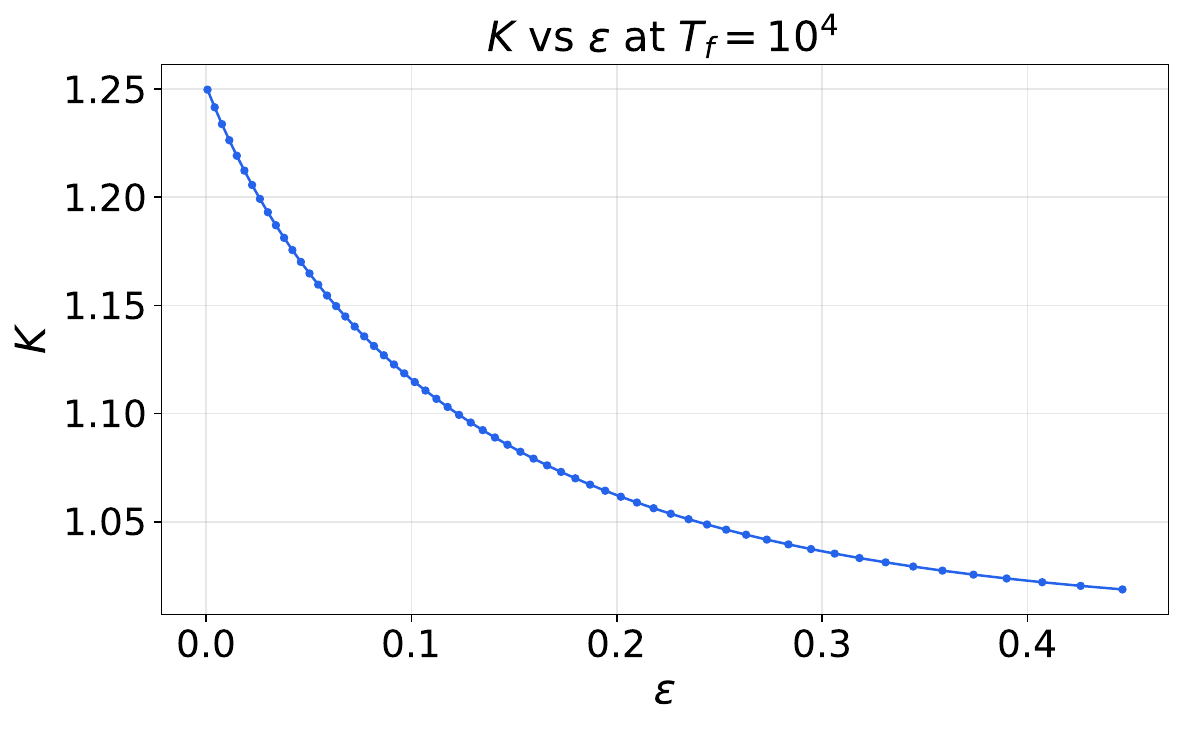}
     \caption{Tradeoff between storage and transport capacities for $T_f=10^4$. The storage capacity is $N=K N^{(\infty)}$ and the transport capacity is $U=c_p T_f^{-1/2+\epsilon}$. The tradeoff reflects compensation between storage and transport resources to satisfy the chance constraint.}
     \label{fig:K_vs_epsilon_Tf10000}
\end{figure}
\begin{figure}
     \centering
     \includegraphics[scale=0.45,trim={0.3cm 0.1cm 0.0cm 0.2cm},clip]{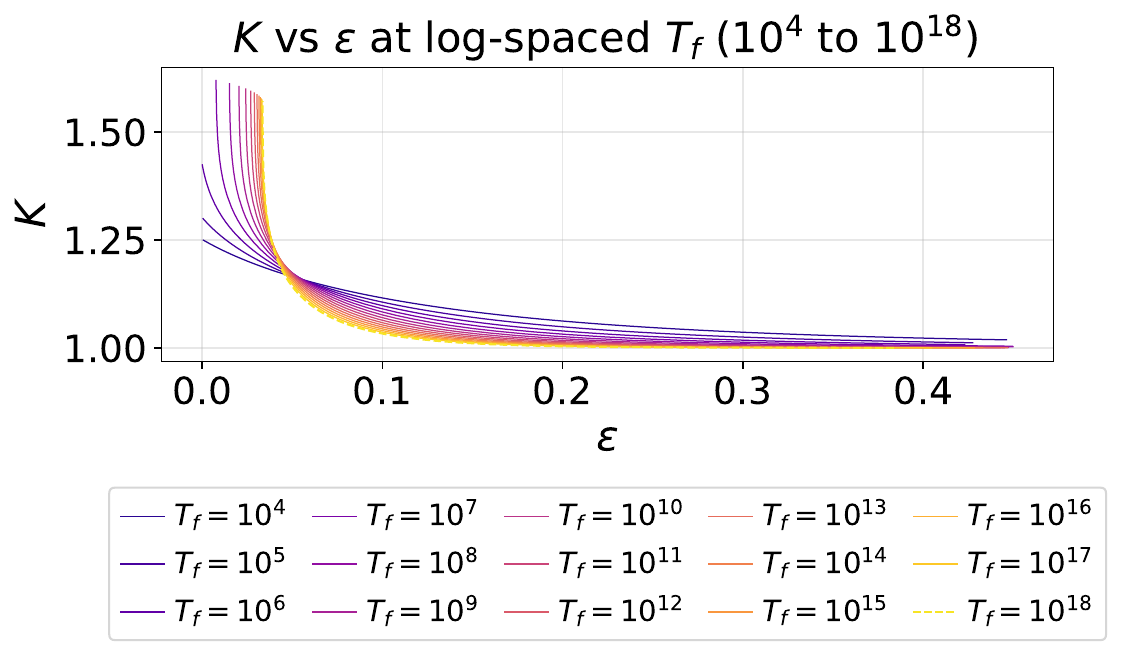}
     \caption{Convergence of the feasible storage-transport capacity as $T_f$ increases. Increasing $T_f$ allows feasibility to be achieved with $K \to 1$ and $\epsilon \to 0$ while satisfying the chance constraint.}
     \label{fig:K_vs_epsilon_Tf_convergence}
\end{figure}
\section{Conclusion}
We study capacity design in a two-node stochastic resource system connected by a capacity-limited transport link. Resources evolve as Brownian-motion-driven processes capturing uncertain supply and demand, with the objective of maintaining both within prescribed limits over a finite horizon with high probability. We characterize the storage–transport trade-off. Without transport, storage scales as $\sqrt{T_f}$, while with unlimited transport the system achieves full pooling, reducing storage to a strictly positive limit. The optimal transport policy is bang-bang, transferring at full capacity from the higher to the lower node. Using a sum and difference decomposition, we show that the minimum storage capacity achieved under perfect pooling is achievable if transport capacity decays slower than $1/\sqrt{T_f}$, while faster decay is infeasible. These results provide design principles for stochastic resource networks exploiting both temporal and spatial diversity.

\appendix
\section{Proof of Theorem~\ref{thm:lower_bound}}\label{appendix:lower_bound}
\vspace{-0.2in}
\begin{align}\label{eq:lower_bound_inequality_1}
   \hspace{-0.35in} \mbox{Note that} \ \textstyle \mathbb{P}[A_1 \cup A_2 \cup A_3 \cup A_4] & \textstyle \geq \mathbb{P}[A_1]\nonumber\\
    & \hspace{-0.65in} \textstyle = \mathbb{P}[\sup_{0 \leq t \leq T_f} X_1^{\pi^\star}(t) \geq \resourceCapacity^{(\infty)}].
\end{align}
For all $s\in [0,T_f]$, $-\transportCapacity\leq \transportCapacity\sgn(X_d^{\pi^\star}(s)) \leq \transportCapacity,$ 
and hence, $\frac{1}{2} \resourceCapacity^{(\infty)}+\sigma W_1(t) -\B\transportCapacity t \leq \frac{1}{2} \resourceCapacity^{(\infty)}+\sigma W_1(t)+\B\transportCapacity\int_{0}^{t} \sgn(X_d^{\pi^\star}(s))\text{d}s \leq \frac{1}{2} \resourceCapacity^{(\infty)}+\sigma W_1(t)+\B\transportCapacity t,$
for all $t \in [0, T_f]$ which implies $\frac{1}{2} \resourceCapacity^{(\infty)}+\sigma W_1(t) -\B\transportCapacity t \leq X_1^{\pi^\star}(t) \leq \frac{1}{2} \resourceCapacity^{(\infty)}+\sigma W_1(t)+\B\transportCapacity t.$
Therefore, $\sup_{0\leq t \leq T_f} (\frac{1}{2} \resourceCapacity^{(\infty)}+\sigma W_1(t) -\B\transportCapacity t) \geq \resourceCapacity^{(\infty)} \implies \sup_{0\leq t \leq T_f} X_1^{\pi^\star}(t) \geq \resourceCapacity^{(\infty)}$.
\begin{align*}
   \hspace{-0.35in} \mbox{Thus,} \ \ & \textstyle \mathbb{P}[\sup_{0\leq t \leq T_f} X_1^{\pi^\star}(t) \geq \resourceCapacity^{(\infty)}]\\
    & \textstyle \geq \mathbb{P}[\sup_{0\leq t \leq T_f} (\frac{1}{2} \resourceCapacity^{(\infty)}+\sigma W_1(t) -\B\transportCapacity t) \geq \resourceCapacity^{(\infty)}]\\
    & \textstyle =\mathbb{P}[\sup_{0\leq t \leq T_f} (W_1(t) -\frac{\B\transportCapacity}{\sigma} t) \geq \frac{\resourceCapacity^{(\infty)}}{2\sigma}]
\end{align*}
Note that, $(W_1(t) -\frac{\B\transportCapacity}{\sigma} t)$ is a Brownian motion with constant drift. A standard inequality involving the hitting time of such drifted Brownian motion can be obtained using Girsanov's theorem \cite{karatzas2012brownian,gordon1941values}, as follows:
\begin{align*}
    \textstyle & \textstyle \mathbb{P}[\sup_{0\leq t \leq T_f} (W_1(t) -\frac{\B\transportCapacity}{\sigma} t) \geq \frac{\resourceCapacity^{(\infty)}}{2\sigma}]\\
    & \textstyle \geq e^{-\frac{\B\transportCapacity\resourceCapacity^{(\infty)}}{2\sigma^2}-\frac{1}{2}\frac{\B \transportCapacity^2}{\sigma^2}T_f} \frac{2}{\sqrt{2\pi}}\frac{(\frac{\resourceCapacity^{(\infty)}}{2\sigma \sqrt{T_f}})}{1+(\frac{\resourceCapacity^{(\infty)}}{2\sigma \sqrt{T_f}})^2}e^{-\frac{1}{2}(\frac{\resourceCapacity^{(\infty)}}{2\sigma \sqrt{T_f}})^2}
\end{align*}
\noindent Since, $\transportCapacity=c_p T_f^{\xi}, \resourceCapacity^{(\infty)} = 2\sigma\sqrt{\ln(2/\delta)}\sqrt{T_f} =: c_{\infty}\sqrt{T_f}$,
\begin{align}\label{eq:lower_bound_inequality_7}
    & \textstyle \mathbb{P}[\sup_{0 \leq t \leq T_f} X_1^{\pi^\star}(t) \geq \resourceCapacity^{(\infty)}]\nonumber\\
    & \textstyle \geq e^{-\frac{\B c_pc_{\infty}}{2\sigma^2}T_f^{\frac{1}{2}+\xi}-\frac{c_p^2}{2\sigma^2} T_f^{2\xi+1}} \frac{2}{\sqrt{2\pi}}\frac{(\frac{c_{\infty}}{2\sigma})}{1+(\frac{c_{\infty}}{2\sigma})^2}e^{-\frac{1}{2}(\frac{c_{\infty}}{2\sigma})^2}.
\end{align}
As $\xi < -\frac{1}{2}$, we have $\xi+\frac{1}{2} < 0$, and also $2\xi+1 < 0$. Then, 
$\lim_{T_f \to \infty} e^{-\frac{\B c_pc_{\infty}}{2\sigma^2}T_f^{\frac{1}{2}+\xi}-\frac{c_p^2}{2\sigma^2} T_f^{2\xi+1}} = 1.$
Thus, we have $\lim_{T_f \to \infty} e^{-\frac{\B c_pc_{\infty}}{2\sigma^2}T_f^{\frac{1}{2}+\xi}-\frac{c_p^2}{2\sigma^2} T_f^{2\xi+1}} \frac{2}{\sqrt{2\pi}}\frac{(\frac{c_{\infty}}{2\sigma})}{1+(\frac{c_{\infty}}{2\sigma})^2}e^{-\frac{1}{2}(\frac{c_{\infty}}{2\sigma})^2} = \frac{2}{\sqrt{2\pi}}\frac{(\frac{c_{\infty}}{2\sigma})}{1+(\frac{c_{\infty}}{2\sigma})^2}e^{-\frac{1}{2}(\frac{c_{\infty}}{2\sigma})^2} = \frac{2}{\sqrt{2\pi}}\frac{\sqrt{\ln(2/\delta)}}{1+\ln(2/\delta)}\sqrt{\frac{\delta}{2}}$.
There exists a $\delta^*\in(0,1)$ such that for all $\delta \leq \delta^*$, $\frac{2}{\sqrt{2\pi}}\frac{\sqrt{\ln(2/\delta)}}{1+\ln(2/\delta)}\sqrt{\frac{\delta}{2}}-\delta > 0$.
Therefore, $\forall \delta \leq \delta^*$, $\lim_{T_f \to \infty}\mathbb{P}[\sup_{0\leq t \leq T_f} X_1^{\pi^\star}(t) \geq \resourceCapacity^{(\infty)}] > \delta$ and thus we have the desired result.
\section{Proof of Proposition~\ref{prop:conservative_sol_two_mg}}\label{appendix:conservative_sol_two_mg}
Let $A_d := \{\sup_{0 \le t \le T_f}|X_d^{\pi^\star}(t)| < \beta\}$ and $A_d' := \{\sup_{0 \leq t \leq T_f} |X_{d}^{\pi^\star}(t)| \geq \beta\}$. Then
$\mathbb{P}(A_1 \cup A_2 \cup A_3 \cup A_4)
= \mathbb{P}((A_1 \cup A_2 \cup A_3 \cup A_4)\cap A_d)
+ \mathbb{P}((A_1 \cup A_2 \cup A_3 \cup A_4)\cap A_d')$.

On $A_d$, using standard set algebra (with intermediate steps omitted for brevity), we obtain $A_1\cap A_d \subseteq A_c, A_3\cap A_d \subseteq A_c$ and $A_2\cap A_d \subseteq \tilde A_c, A_4\cap A_d \subseteq \tilde A_c$, where $A_c := \{X_{c,T_f}^{\pi^\star,\sup} \ge 2N^{(\infty)}-\beta\}$ and $\tilde A_c := \{X_{c,T_f}^{\pi^\star,\inf} \le \beta\}$. Hence,
$\mathbb{P}((A_1 \cup A_2 \cup A_3 \cup A_4)\cap A_d) \le \mathbb{P}(A_c)+\mathbb{P}(\tilde A_c)$. Combining with $A_d'$, we obtain
$\mathbb{P}(A_1 \cup A_2 \cup A_3 \cup A_4)
\le \mathbb{P}(A_c)+\mathbb{P}(\tilde A_c)+\mathbb{P}(A_d')$. This completes the proof.
\section{Proof of Lemma~\ref{lem:difference_process_upper_bound}}\label{appendix:difference_process_upper_bound}
Let $\hat{\tau}_{\beta} := \inf\{t \ge 0 : X_d^{\pi^\star}(t)=\beta\}$ denote the first hitting time of level $\beta$, so that $\mathbb{P}(\sup_{0\le t\le T_f} X_d^{\pi^\star}(t)\ge \beta)=\mathbb{P}(\hat{\tau}_{\beta}\le T_f)$. For $0<\hat{\beta}<\beta$, reaching $\beta$ requires first hitting $\hat{\beta}$. Hence, $\mathbb{P}(\hat{\tau}_{\beta}\le T_f)$ can be bounded by counting excursions of $X_d^{\pi^\star}(t)$ from $0$ to $\hat{\beta}$ over $[0,T_f]$, followed by the probability of reaching $\beta$ from $\hat{\beta}$ in each such excursion. To formalize this, we define epochs associated with $X_d^{\pi^\star}(t)$.

\begin{figure}
     \centering
     \includegraphics[scale=0.27,trim={2.45cm 0.5cm 2cm 0.1cm},clip]{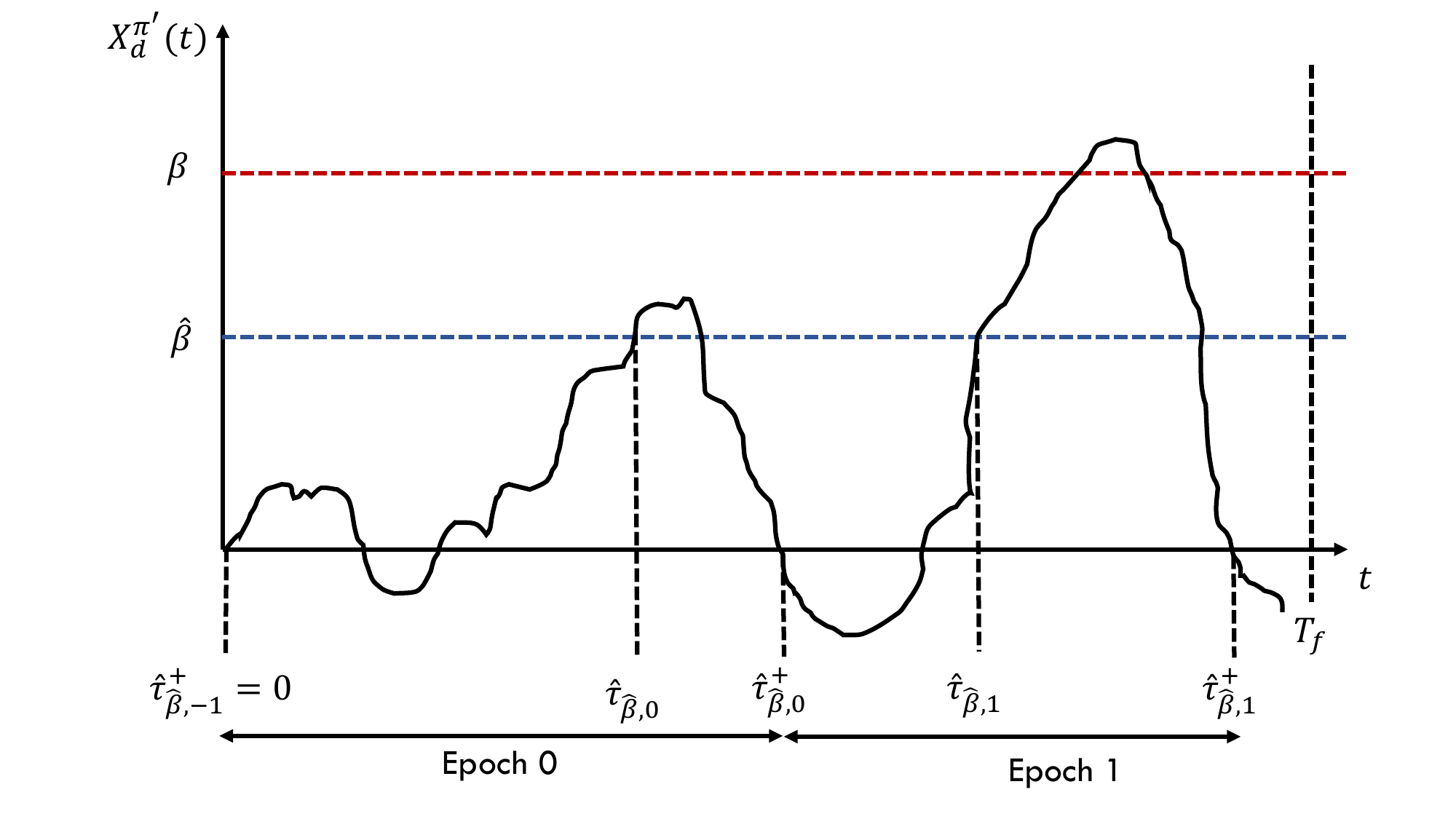}
     \caption{Illustration of epochs for a sample trajectory of $X_d^{\pi^\star}(t)$}
     \label{fig:epochs}
\end{figure}

Consider the process $X_d^{\pi^\star}(t)$ starting at $X_d^{\pi^\star}(0)=0$. Let $\hat{\tau}_{\hat{\beta},0}$ be the first hitting time of level $\hat{\beta}$, and let $\hat{\tau}_{\hat{\beta},0}^{+}$ denote the first return time to $0$ after $\hat{\tau}_{\hat{\beta},0}$. The interval $[0,\hat{\tau}_{\hat{\beta},0}^{+}]$ defines epoch $0$. For $i\ge 1$, define recursively $\hat{\tau}_{\hat{\beta},-1}^{+}:=0$,
$\hat{\tau}_{\hat{\beta},i}:=\inf\{t\ge \hat{\tau}_{\hat{\beta},i-1}^{+}: X_d^{\pi^\star}(t)=\hat{\beta}\}$, and
$\hat{\tau}_{\hat{\beta},i}^{+}:=\inf\{t\ge \hat{\tau}_{\hat{\beta},i}: X_d^{\pi^\star}(t)=0\}$. The interval $[\hat{\tau}_{\hat{\beta},i-1}^{+},\hat{\tau}_{\hat{\beta},i}^{+}]$ defines epoch $i$. The epochs are illustrated in Fig.~\ref{fig:epochs}.

We upper bound $\mathbb{P}(\hat{\tau}_{\beta}\le T_f)$ by counting the number of times $X_d^{\pi^\star}(t)$ hits $\hat{\beta}$ from $0$ over $[0,T_f]$, and bounding the probability of reaching $\beta$ from $\hat{\beta}$ after each such hit. Fix $T>0$ and consider the event $\{\hat{\tau}_{\hat{\beta},i}-\hat{\tau}_{\hat{\beta},i-1}^{+}\le T\}$, i.e., $\hat{\beta}$ is reached within time $T$ after the start of epoch $i$. Lemma~\ref{lem:upper_bound_1} provides an upper bound on this probability. Let $K$ denote the number of epochs in $[0,T_f]$ and define $K_{\max}:=T_f/T$. Corollary~\ref{cor:num_epochs} bounds $\mathbb{P}(K>K_{\max})$, showing that the number of epochs is at most $K_{\max}$ with high probability for large $T_f$. We now state Lemma~\ref{lem:upper_bound_1} and Corollary~\ref{cor:num_epochs}.
\begin{lemma}\label{lem:upper_bound_1}
    Let the transport capacity, $\transportCapacity$, $T > 0$, and $\hat{\beta} > 0$ be given. For any epoch $i\in \mathbb{N}$, define $\hat{\tau}_{\hat{\beta},-1}^{+}:=0$, $\hat{\tau}_{\hat{\beta},i}:=\inf\{t\ge \hat{\tau}_{\hat{\beta},i-1}^{+}: X_d^{\pi^\star}(t)=\hat{\beta}\}$, and $\hat{\tau}_{\hat{\beta},i}^{+}:=\inf\{t\ge \hat{\tau}_{\hat{\beta},i}: X_d^{\pi^\star}(t)=0\}$. Then
    \begin{align}
        \textstyle \mathbb{P}[\hat{\tau}_{\hat{\beta}, i} \leq \hat{\tau}_{\hat{\beta}, i-1}^{+} + T] \leq e^{-\big(\frac{\hat{\beta}^2}{2\sigma_d^2T}-\frac{2\B \hat{\beta}\transportCapacity}{\sigma_d^2}\big)}.
    \end{align}
\end{lemma}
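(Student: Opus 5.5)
At the start of epoch $i$ (time $\hat{\tau}^{+}_{\hat{\beta},i-1}$, a stopping time) the process sits at $0$. The plan is to restart the process there via the strong Markov property and dominate it by a Brownian motion with drift, then apply an exponential-martingale bound. The SDE~\eqref{eq:difference_equation} has bounded measurable drift $-2\transportCapacity\,\sgn(x)$ and constant diffusion, so it has a unique strong solution that is strong Markov. Hence $Y(s):=X_d^{\pi^\star}(\hat{\tau}^{+}_{\hat{\beta},i-1}+s)$ solves the same equation started at $0$, driven by the Brownian motion $\tilde W_d(s):=W_d(\hat{\tau}^{+}_{\hat{\beta},i-1}+s)-W_d(\hat{\tau}^{+}_{\hat{\beta},i-1})$, which is independent of the past. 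The event in question is $\{\sup_{0\le s\le T} Y(s)\ge\hat{\beta}\}$. For epoch $0$ the start time is $0$ and no restart is needed.

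The key pathwise comparison uses $-2\transportCapacity\,\sgn(Y)\le 2\transportCapacity$, which gives $Y(s)\le \sigma_d\tilde W_d(s)+2\transportCapacity s\le \sigma_d\tilde W_d(s)+2\transportCapacity T$ for $s\le T$. The paper writes the transport term with a factor $\B$; the bound is stated with the same factor, so I carry it along unchanged. Consequently
\begin{align*}
\mathbb{P}\big[\sup_{0\le s\le T}Y(s)\ge\hat{\beta}\big]\le \mathbb{P}\big[\sup_{0\le s\le T}\big(\sigma_d\tilde W_d(s)+2\transportCapacity s\big)\ge\hat{\beta}\big].
\end{align*}

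For the right-hand side, take $\lambda>0$ and use the exponential martingale $M_s=\exp(\lambda\tilde W_d(s)-\lambda^2 s/2)$. On the event $\{\sigma_d\tilde W_d(s)+2\transportCapacity s\ge\hat{\beta}\}$ we have $\tilde W_d(s)\ge(\hat{\beta}-2\transportCapacity s)/\sigma_d$. The cleanest route is a change of measure: by Girsanov, $\tilde W_d(s)+(2\transportCapacity/\sigma_d)s$ under $\mathbb{P}$ has density $\exp\big((2\transportCapacity/\sigma_d) B_T-(2\transportCapacity^2/\sigma_d^2)T\big)$ relative to a standard Brownian motion $B$, up to time $T$. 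On the hitting event, optional stopping at the hitting time $\tau_a$ of level $a=\hat{\beta}/\sigma_d$ (so $B_{\tau_a}=a$) gives a factor of at most $e^{2\transportCapacity\hat{\beta}/\sigma_d^2}$ after dropping the negative quadratic term. The driftless maximal inequality then gives $\mathbb{P}[\sup_{s\le T}B_s\ge a]\le e^{-a^2/(2T)}$ via Doob applied to $e^{\lambda B}$ with $\lambda=a/T$. Multiplying yields $e^{-\hat{\beta}^2/(2\sigma_d^2T)+2\hat{\beta}\transportCapacity/\sigma_d^2}$, which is exactly the claimed bound.

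As a fallback avoiding Girsanov, apply Doob's inequality to $e^{\lambda(\sigma_d\tilde W_d(s)+2\transportCapacity s)}$, a submartingale, with $\lambda=\hat{\beta}/(\sigma_d^2T)$. This gives $\exp(-\lambda\hat{\beta}+\lambda^2\sigma_d^2T/2+2\lambda\transportCapacity T)$, which evaluates to the same expression $e^{-\hat{\beta}^2/(2\sigma_d^2T)+2\hat{\beta}\transportCapacity/\sigma_d^2}$.

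I expect the main obstacle to be the rigorous justification of the restart: showing that the solution of the discontinuous-drift SDE is strong Markov and that the epoch times are almost surely finite stopping times. If an epoch time is infinite, the event is empty and the bound holds trivially. I would invoke Zvonkin/Veretennikov strong existence and uniqueness to settle the strong Markov property. Everything after that is standard Brownian maximal estimates.
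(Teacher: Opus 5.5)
Your proposal is correct and follows essentially the same route as the paper. Both restart the process at the stopping time $\hat{\tau}^{+}_{\hat{\beta},i-1}$, where $X_d^{\pi^\star}=0$. Both then dominate it pathwise by $\sigma_d\tilde{W}_d(h)+2\transportCapacity h$ and bound the hitting probability of this drifted Brownian motion. Your Girsanov/optional-stopping computation and your Doob exponential-submartingale fallback both correctly give the stated exponent, which supplies a step the paper only asserts.

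Two small refinements. First, the pathwise comparison only needs $\tilde{W}_d$ to be a Brownian motion, which follows from the strong Markov property of $W_d$ at a stopping time, so the Zvonkin--Veretennikov strong-Markov argument for $X_d^{\pi^\star}$ is unnecessary. Second, dismiss an infinite epoch time as a null event using recurrence of $X_d^{\pi^\star}$, rather than by calling the event empty, since $\infty\le\infty+T$ holds.
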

\begin{proof}
    Please see Appendix~\ref{appendix:upper_bound_1}.
\end{proof}
\begin{corollary}\label{cor:num_epochs}
    Let the random variable $K$ denote the number of epochs occurring within the time interval $[0,T_f]$. Given $T > 0$, define $K_{max}:=\frac{T_f}{T}$. Then, for a given transport capacity $\transportCapacity$, and $\hat{\beta} > 0$, we have,
    \begin{align}
        \textstyle \mathbb{P}[K > K_{max}] \leq K_{max} e^{-\big(\frac{\hat{\beta}^2}{2\sigma_d^2T}-\frac{2\B \hat{\beta}\transportCapacity}{\sigma_d^2}\big)}.
    \end{align}
\end{corollary}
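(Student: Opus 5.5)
The bound follows from Lemma~\ref{lem:upper_bound_1} and a union bound over the first $\lfloor K_{max}\rfloor+1$ epochs. Write $p:=e^{-\big(\frac{\hat{\beta}^2}{2\sigma_d^2T}-\frac{2\transportCapacity\hat{\beta}}{\sigma_d^2}\big)}$ for the right-hand side of Lemma~\ref{lem:upper_bound_1}, and let $D_i:=\hat{\tau}_{\hat{\beta},i}-\hat{\tau}_{\hat{\beta},i-1}^{+}$ be the time from the start of epoch $i$ until it first reaches $\hat{\beta}$. Each epoch $i$ occupies $[\hat{\tau}_{\hat{\beta},i-1}^{+},\hat{\tau}_{\hat{\beta},i}^{+}]$, so its length is at least $D_i$.

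\textbf{Counting step.} If $K>K_{max}$ epochs occur in $[0,T_f]$, then the first $m:=\lceil K_{max}\rceil$ epochs (or $\lfloor K_{max}\rfloor+1$, depending on how epochs are counted) all end by $T_f$. This gives
\[
\sum_{i=0}^{m-1} D_i\le \hat{\tau}_{\hat{\beta},m-1}^{+}\le T_f = K_{max}T .
\]
A sum of $m\geq K_{max}$ nonnegative terms bounded by $K_{max}T$ forces at least one term to satisfy $D_i\le T$. Hence
\[
\{K>K_{max}\}\subseteq \bigcup_{i=0}^{m-1}\{\hat{\tau}_{\hat{\beta},i}\le \hat{\tau}_{\hat{\beta},i-1}^{+}+T\}.
\]

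\textbf{Union bound.} Apply the union bound to this inclusion and use Lemma~\ref{lem:upper_bound_1} for each $i$. This gives $\mathbb{P}[K>K_{max}]\le m\,p$. Lemma~\ref{lem:upper_bound_1} holds for every epoch index, which follows from the strong Markov property at the stopping time $\hat{\tau}_{\hat{\beta},i-1}^{+}$, where $X_d^{\pi^\star}=0$ again. So the same bound $p$ applies to every term, and no independence between epochs is needed.

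\textbf{Main obstacle: the counting constant.} To get exactly $K_{max}$ rather than $\lceil K_{max}\rceil$ as the prefactor, I would count the epochs completed inside $[0,T_f]$ as the epochs started after time $0$. Alternatively, I would use a pigeonhole with strict inequality: $K>K_{max}$ together with $\sum D_i\le T_f$ gives some $D_i<T_f/K\le T$, taken over the $\lfloor K_{max}\rfloor$ indices that must be present. If that cannot be made tight, the bound holds with $\lceil K_{max}\rceil\le K_{max}+1$. This change is harmless downstream: it accounts for the ``$+1$'' appearing in Lemma~\ref{lem:difference_process_upper_bound}.

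The remaining care points are that $D_i$ is well defined on $\{K>K_{max}\}$, i.e.\ the relevant hitting times are finite, and that the events above are measurable. Both are standard for this continuous process.
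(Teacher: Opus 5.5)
Your argument is correct and is essentially the paper's: the paper bounds $\mathbb{P}[L_i\le T]\le p$ for the epoch lengths $L_i\ge D_i$ via Lemma~\ref{lem:upper_bound_1}, observes that $\bigcap_{i=0}^{K_{max}-1}\{L_i>T\}\subseteq\{K\le K_{max}\}$ (the contrapositive of your pigeonhole step), and finishes with the same union bound, tacitly treating $K_{max}$ as an integer, in which case your $m=\lceil K_{max}\rceil$ equals $K_{max}$ and the rounding issue disappears. Two small corrections to your side remarks: restricting to only $\lfloor K_{max}\rfloor$ indices would not force some $D_i\le T$ when $K_{max}$ is fractional, and the ``$+1$'' in Lemma~\ref{lem:difference_process_upper_bound} comes from the incomplete final epoch $\tilde{G}_K$, not from rounding $K_{max}$.
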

\begin{proof}
    Please see Appendix~\ref{appendix:num_epochs}.
\end{proof}

Having analyzed the number of epochs occurring in $[0,T_f]$, we now bound the probability of reaching $\beta$ from the intermediate level $\hat{\beta}$ within the horizon $T_f$. In particular, we consider the events $\{\hat{\tau}_{\beta,i} \le \hat{\tau}_{\hat{\beta},i}^{+},\, \hat{\tau}_{\hat{\beta},i}^{+} \le T_f\}$ and $\{\hat{\tau}_{\beta,i} \le T_f,\, \hat{\tau}_{\hat{\beta},i}^{+} > T_f\}$, where $\hat{\tau}_{\beta,i} := \inf\{t \ge \hat{\tau}_{\hat{\beta},i} : X_d^{\pi^\star}(t)=\beta\}$
denotes the first hitting time of $\beta$ in epoch $i$.
\begin{lemma}\label{lem:upper_bound_2}
    Let the transport capacity $\transportCapacity$ and levels $\beta > \hat{\beta} > 0$ be given. For each epoch $i \in \mathbb{N}$, define $\hat{\tau}_{\hat{\beta},-1}^{+}:=0$, $\hat{\tau}_{\hat{\beta},i}:=\inf\{t\ge \hat{\tau}_{\hat{\beta},i-1}^{+}: X_d^{\pi^\star}(t)=\hat{\beta}\}$, and $\hat{\tau}_{\hat{\beta},i}^{+}:=\inf\{t\ge \hat{\tau}_{\hat{\beta},i}: X_d^{\pi^\star}(t)=0\}$. Let $\hat{\tau}_{\beta,i}$ be defined as $\hat{\tau}_{\beta,i} := \inf\{t \ge \hat{\tau}_{\hat{\beta},i} : X_d^{\pi^\star}(t)=\beta\}$. Then, for any epoch $i \in \mathbb{N}$, we have
    \begin{align}
        &\textstyle \mathbb{P}[\hat{\tau}_{\beta, i} \leq \hat{\tau}_{\hat{\beta}, i}^{+}, \hat{\tau}_{\hat{\beta}, i}^{+} \leq T_f] \textstyle \leq e^{-\frac{2\B \transportCapacity(\beta-\hat{\beta})}{\sigma_d^2}}, \text{ and}\nonumber\\
        &\textstyle \mathbb{P}[\hat{\tau}_{\beta, i} \leq T_f, \hat{\tau}_{\hat{\beta}, i}^{+} > T_f] \textstyle \leq e^{-\frac{2\B \transportCapacity(\beta-\hat{\beta})}{\sigma_d^2}}.
    \end{align}
\end{lemma}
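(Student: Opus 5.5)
Both bounds are Girsanov/exponential-martingale estimates for the bang-bang process above level $\hat{\beta}$, applied after restarting at the stopping time $\hat{\tau}_{\hat{\beta},i}$.

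\emph{Reduction to drifted Brownian motion.} By the strong Markov property of the (strong) solution of \eqref{eq:difference_equation}, the process $Y(s):=X_d^{\pi^\star}(\hat{\tau}_{\hat{\beta},i}+s)$ restarts at $Y(0)=\hat{\beta}$. It is driven by a fresh Brownian motion $\tilde W(s):=W_d(\hat{\tau}_{\hat{\beta},i}+s)-W_d(\hat{\tau}_{\hat{\beta},i})$. As long as $Y$ stays strictly positive, i.e. for $s<\hat{\tau}_{\hat{\beta},i}^{+}-\hat{\tau}_{\hat{\beta},i}$, we have $\sgn(Y)=1$. Hence on that interval $Y$ coincides with the drifted Brownian motion
\[
Z(s)=\hat{\beta}+\sigma_d\tilde W(s)-2\transportCapacity s .
\]
Both events in the lemma require $X_d^{\pi^\star}$ to reach $\beta$ after $\hat{\tau}_{\hat{\beta},i}$ before it returns to $0$.
\begin{itemize}
\item In the first event this is explicit, since $\hat{\tau}_{\beta,i}\le \hat{\tau}_{\hat{\beta},i}^{+}$.
\item In the second event it holds because $\hat{\tau}_{\beta,i}\le T_f<\hat{\tau}_{\hat{\beta},i}^{+}$.
\end{itemize}
So each event is contained in $\{Z \text{ hits } \beta \text{ before } 0\}$, which is in turn contained in $\{\sup_{s\ge 0} Z(s)\ge \beta\}$. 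For the second event we could also simply note that $Y=Z$ on the relevant interval.

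\emph{Exponential martingale bound.} Set $\theta:=4\transportCapacity/\sigma_d^2$. Then
\[
M(s)=\exp\bigl(\theta (Z(s)-\hat{\beta})\bigr)=\exp\bigl(\theta\sigma_d\tilde W(s)-\tfrac12\theta^2\sigma_d^2 s\bigr)
\]
is a positive martingale, because $\theta\cdot 2\transportCapacity=\tfrac12\theta^2\sigma_d^2$. Optional stopping at the hitting time of $\beta$ truncated at $t$, followed by $t\to\infty$ and Fatou's lemma, gives the maximal inequality
\[
\mathbb{P}\Bigl[\sup_{s\ge 0} Z(s)\ge\beta\Bigr]\le e^{-\theta(\beta-\hat{\beta})}=e^{-4\transportCapacity(\beta-\hat{\beta})/\sigma_d^2}.
\]
Ville's inequality gives the same bound. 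This is at most the claimed $e^{-2\transportCapacity(\beta-\hat{\beta})/\sigma_d^2}$ because $\transportCapacity\ge0$. The weaker constant presumably reflects the paper's convention, e.g. a drift $\transportCapacity$ and exponent $2\mu/\sigma^2$, and any valid bound suffices. Since the estimate is uniform in the starting time, it applies to every epoch $i$.

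\emph{Main obstacle.} The delicate step is the rigorous justification of the strong Markov/restart argument at the random time $\hat{\tau}_{\hat{\beta},i}$. This needs the existence of a strong solution to the discontinuous-drift SDE \eqref{eq:difference_equation}, which holds by Zvonkin/Veretennikov for bounded measurable drift and nondegenerate noise. It also needs the pathwise identification $Y=Z$ up to the first zero. If the strong Markov property is not available, I would instead work directly with the stopped process. The process $\exp\bigl(\theta(X_d^{\pi^\star}(t)-\hat{\beta})-\ldots\bigr)$ is a supermartingale on $[\hat{\tau}_{\hat{\beta},i},\hat{\tau}_{\hat{\beta},i}^{+}]$ because the drift there is exactly $-2\transportCapacity$. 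Conditional optional stopping given $\mathcal{F}_{\hat{\tau}_{\hat{\beta},i}}$ then yields the same bound without any restart argument.
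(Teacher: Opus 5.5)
Your proposal is correct and follows essentially the same route as the paper. Both arguments restart at $\hat{\tau}_{\hat{\beta},i}$ via the strong Markov property, identify $X_d^{\pi^\star}$ with the drifted Brownian motion $\hat{\beta}+\sigma_d\tilde W_d(h)-2Uh$ up to its return to $0$, and bound the probability of hitting $\beta$ with a Girsanov/exponential-martingale estimate; your explicit optional-stopping step in fact gives the sharper exponent $4U(\beta-\hat{\beta})/\sigma_d^2$, which implies the stated bound.
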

\begin{proof}
    Please see Appendix~\ref{appendix:upper_bound_2}.
\end{proof}
Finally, in Corollary~\ref{cor:num_epochs} and Lemma~\ref{lem:upper_bound_2}, we choose $\transportCapacity = c_p T_f^{-1/2+\epsilon}$, $T = \frac{1}{4\B c_p}T_f^{2\gamma-\epsilon/2}$, $K_{\max}=4\B c_p T_f^{1-2\gamma+\epsilon/2}$, $\hat{\beta}=T_f^{\gamma}$, and $\beta=T_f^{1/2-\epsilon/2}$, where $c_p>0$ and $0<\epsilon<\frac{1}{2}-\gamma$ and show that $\mathbb{P}(\hat{\tau}_{\beta}\le T_f)\to 0$ as $T_f\to\infty$. This follows since reaching $\beta$ from $\hat{\beta}$ within $[0,T_f]$ requires at least one successful epoch, while both the probability of exceeding $K_{\max}$ epochs (by Corollary~\ref{cor:num_epochs}) and the probability of reaching $\beta$ from $\hat{\beta}$ within an epoch (by Lemma~\ref{lem:upper_bound_2}) vanish as $T_f$ increases. We defer the detailed analysis to the following paragraphs.

Note that to reach level $\beta$, the process $X_d(t)$ must first hit $\hat{\beta}$ since $\beta>\hat{\beta}$ and $X_d(t)$ is continuous. Hence, $\{\hat{\tau}_\beta \le T_f\}$ occurs within some epoch. Let $K$ denote the number of epochs in $[0,T_f]$. Then $\mathbb{P}(\hat{\tau}_\beta \le T_f) \le \mathbb{P}(\cup_{i=0}^{K-1} G_i) + \mathbb{P}(\tilde{G}_K)$, where $G_i := \{\hat{\tau}_{\beta,i} \le \hat{\tau}_{\hat{\beta},i}^{+},\, \hat{\tau}_{\hat{\beta},i}^{+} \le T_f\}$ and $\tilde{G}_K := \{\hat{\tau}_{\beta,K} \le T_f,\, \hat{\tau}_{\hat{\beta},K}^{+} > T_f\}$. Let $K_{\max}$ be as defined earlier. Here, we can show that $\mathbb{P}(\cup_{i=0}^{K-1} G_i) \le \sum_{i=0}^{K_{\max}-1} \mathbb{P}(G_i) + \mathbb{P}(K>K_{\max})$. Applying Corollary~\ref{cor:num_epochs} and Lemma~\ref{lem:upper_bound_2} yields $\mathbb{P}(G_i) \le K_{\max}\Bigg[e^{-\frac{2\B U(\beta-\hat{\beta})}{\sigma_d^2}} + e^{-\left(\frac{\hat{\beta}^2}{2\sigma_d^2T}-\frac{2\B \hat{\beta}U}{\sigma_d^2}\right)}\Bigg]$ and $\mathbb{P}(\tilde{G}_K) \le e^{-\frac{2\B U(\beta-\hat{\beta})}{\sigma_d^2}}$. Thus we obtain $\mathbb{P}(\hat{\tau}_{\beta} \le T_f) \le K_{\max} e^{-\left(\frac{\hat{\beta}^2}{2\sigma_d^2T}-\frac{2\B\hat{\beta}U}{\sigma_d^2}\right)} + (K_{\max}+1)e^{-\frac{2\B U(\beta-\hat{\beta})}{\sigma_d^2}}$. With $U=c_p T_f^{-1/2+\epsilon}$, $\hat{\beta}=T_f^{\gamma}$, $\beta=T_f^{1/2-\epsilon/2}$, and $T=\frac{1}{4\B c_p}T_f^{2\gamma-\epsilon/2}$, this reduces to $\mathbb{P}(\hat{\tau}_{\beta} \le T_f) \le (K_{\max}+1)\exp\!\left[-\frac{2\B c_p}{\sigma_d^2}\left(T_f^{\epsilon/2}-T_f^{-1/2+\gamma+\epsilon}\right)\right]$. Since $K_{\max}=4\B c_p T_f^{1-2\gamma+\epsilon/2}$ grows polynomially while the exponential term decays faster than any polynomial for $0<\epsilon<\frac{1}{2}-\gamma$, it follows that $\lim_{T_f\to\infty}\mathbb{P}(\hat{\tau}_{\beta} \le T_f)=0$.
Thus,
\begin{align*}
    \textstyle \lim\limits_{T_f \to \infty} \mathbb{P}[\hat{\tau}_{\beta} \leq T_f] & \textstyle = \lim\limits_{T_f \to \infty} \mathbb{P}[\sup_{0 \leq t \leq T_f} X_d^{\pi^\star}(t) \geq \beta] = 0.
\end{align*}

A symmetric argument for $\mathbb{P}(\inf_{0\le t \le T_f} X_{d}^{\pi^\star}(t) \le -\beta)$ follows from Lemma~\ref{lem:upper_bound_1}, Corollary~\ref{cor:num_epochs}, and Lemma~\ref{lem:upper_bound_2}, yielding $\mathbb{P}(\inf_{0\le t \le T_f} X_{d}^{\pi^\star}(t) \le -\beta) \le K_{\max} e^{-\left(\frac{\hat{\beta}^2}{2\sigma_d^2T}-\frac{2\B \hat{\beta}\transportCapacity}{\sigma_d^2}\right)} + (K_{\max}+1)e^{-\frac{2\B \transportCapacity(\beta-\hat{\beta})}{\sigma_d^2}}$, and under $\transportCapacity=c_p T_f^{-1/2+\epsilon}$, $\hat{\beta}=T_f^{\gamma}$, $\beta=T_f^{1/2-\epsilon/2}$, and $T=\frac{1}{4\B c_p}T_f^{2\gamma-\epsilon/2}$, we obtain $\lim_{T_f\to\infty}\mathbb{P}(\inf_{0\le t \le T_f} X_{d}^{\pi^\star}(t) \le -\beta)=0$. This completes the proof.
\section{Proof of Lemma~\ref{lem:upper_bound_1}}\label{appendix:upper_bound_1}
\vspace{-0.2in}
\begin{align}\label{eq:middle_level_in_sup}
   \hspace{-0.2in} \mbox{Consider}, \quad  & \textstyle \mathbb{P}[\hat{\tau}_{\hat{\beta}, i} \leq \hat{\tau}_{\hat{\beta}, i-1}^{+}+T] \nonumber \\
    & \textstyle = \mathbb{P}[\sup_{\hat{\tau}_{\hat{\beta}, i-1}^{+} \leq t \leq \hat{\tau}_{\hat{\beta}, i-1}^{+}+T}X_d^{\pi^\star}(t) \geq T_f^{\gamma}]\nonumber\\
    & \textstyle = \mathbb{P}[\sup_{0 \leq h \leq T}X_d^{\pi^\star}(h+\hat{\tau}_{\hat{\beta}, i-1}^{+}) \geq T_f^{\gamma}]
\end{align}
Note that, for all $h \geq 0$, we have
\begin{align*}
    & \textstyle X_d^{\pi^\star}(h+\hat{\tau}_{\hat{\beta}, i-1}^{+}) = \sigma_d W_d(h+\hat{\tau}_{\hat{\beta}, i-1}^{+})\\
    & \textstyle -2\transportCapacity\int_0^{h+\hat{\tau}_{\hat{\beta}, i-1}^{+}} \sgn(X_d^{\pi^\star}(s))\text{d}s\\
    & \textstyle = X_d^{\pi^\star}(\hat{\tau}_{\hat{\beta}, i-1}^{+}) +\sigma_d[ W_d(h+\hat{\tau}_{\hat{\beta}, i-1}^{+}) - W_d(\hat{\tau}_{\hat{\beta}, i-1}^{+})]\\
    & \textstyle -2\transportCapacity\int_0^h \sgn(X_d^{\pi^\star}(v+\hat{\tau}_{\hat{\beta}, i-1}^{+}))\text{d}v.
\end{align*}

By the strong Markov property of Brownian motion, we have $\tilde{W}_d(h):=W_d(h+\hat{\tau}_{\hat{\beta}, i-1}^{+})-W_d(\hat{\tau}_{\hat{\beta}, i-1}^{+}), h \geq 0$, is again a Brownian motion [\cite{oksendal2013stochastic} Section 7.2, pp. 114]. Further by definition, $X_d^{\pi^\star}(\hat{\tau}_{\hat{\beta}, i-1}^{+})=0$. Therefore, $X_d^{\pi^\star}(h+\hat{\tau}_{\hat{\beta}, i-1}^{+}) = \sigma_d \tilde{W}_d(h) -2\transportCapacity\int_0^h \sgn(X_d^{\pi^\star}(v+\hat{\tau}_{\hat{\beta}, i-1}^{+}))\text{d}v$.
Note, for all $v\geq 0$ we have $\textstyle -2\transportCapacity\leq 2\transportCapacity\sgn(X_d^{\pi^\star}(v+\hat{\tau}_{\hat{\beta}, i-1}^{+})) \leq 2\transportCapacity$ and hence, for all $h\in[0,T]$,
\begin{align*}
    &  \hspace{-0.25in} \textstyle \sigma_d \tilde{W}_d(h) -2\B\transportCapacity h \leq X_d^{\pi^\star}(h+\hat{\tau}_{\hat{\beta}, i-1}^{+}) \leq \sigma_d \tilde{W}_d(h)+2\B\transportCapacity h.
\end{align*}
\begin{align}\label{eq:middle_level_inequality_1}
 \hspace{-0.5in} \mbox{Thus,} \  & \textstyle \mathbb{P}[\sup_{0\leq h \leq T} X_d^{\pi^\star}(h + \hat{\tau}_{\hat{\beta}, i-1}^{+}) \geq \hat{\beta}]\nonumber\\
    & \textstyle \leq \mathbb{P}[\sup_{0\leq h \leq T}(\sigma_d \tilde{W}_d(h)+2\B\transportCapacity h) \geq \hat{\beta}].
\end{align}
Let $\overline{\tau}_{\hat{\beta}} := \inf \{h \geq 0 : \sigma_d \tilde{W}_d(h)+2\B\transportCapacity h \geq \hat{\beta}\}$ denote the first time $\sigma_d \tilde{W}_d(h)+2\B\transportCapacity h$ reaches the level $\hat{\beta}$. Then, we have $\mathbb{P}[\sup_{0\leq h \leq T}(\sigma_d \tilde{W}_d(h)+2\B\transportCapacity h) \geq \hat{\beta}]=\mathbb{P}[\overline{\tau}_{\hat{\beta}} \leq T]$. Combining~\eqref{eq:middle_level_in_sup}, and~\eqref{eq:middle_level_inequality_1}, we have
\begin{align}\label{eq:middle_level_inequality_2}
    \textstyle \mathbb{P}[\hat{\tau}_{\hat{\beta}, i}\leq \hat{\tau}_{\hat{\beta}, i-1}^{+}+T] \leq \mathbb{P}[\overline{\tau}_{\gamma} \leq T].
\end{align}
Note that the process, $\sigma_d \tilde{W}_d(h)+2\B\transportCapacity h, h\in [0,T]$ is a Wiener process with positive drift of $2\B\transportCapacity$. Here using Girsanov's theorem, we can show that $\mathbb{P}[\overline{\tau}_{\gamma} \leq T] \leq e^{\frac{2\B\transportCapacity\hat{\beta}}{\sigma_d^2}-\frac{\hat{\beta}^2}{2\sigma_d^2T}}$. Thus from~\eqref{eq:middle_level_inequality_2}, we have $\mathbb{P}[\hat{\tau}_{\hat{\beta}, i} \leq \hat{\tau}_{\hat{\beta}, i-1}^{+} + T] \leq e^{-\big(\frac{\hat{\beta}^2}{2\sigma_d^2T}-\frac{2\B \hat{\beta}\transportCapacity}{\sigma_d^2}\big)}$.
This completes the proof.
\section{Proof of Corollary~\ref{cor:num_epochs}}\label{appendix:num_epochs}
Let the random variable $L_i := \hat{\tau}_{\hat{\beta}, i}^{+}-\hat{\tau}_{\hat{\beta}, i-1}^{+}$ denote the length of epoch $i\in \mathbb{N}$. Note that, $\hat{\tau}_{\hat{\beta}, i} \leq \hat{\tau}_{\hat{\beta}, i}^{+}$ by the definition of $\hat{\tau}_{\hat{\beta}, i}^{+}$. Therefore,
\begin{align*}
    &\textstyle \hat{\tau}_{\hat{\beta}, i}-\hat{\tau}_{\hat{\beta}, i-1}^{+} \leq \hat{\tau}_{\hat{\beta}, i}^{+}-\hat{\tau}_{\hat{\beta}, i-1}^{+}
    \implies \textstyle \hat{\tau}_{\hat{\beta}, i}-\hat{\tau}_{\hat{\beta}, i-1}^{+} \leq L_i.
\end{align*}
Thus for a given $T>0$, we have $L_i \leq T \implies \hat{\tau}_{\hat{\beta}, i}-\hat{\tau}_{\hat{\beta}, i-1}^{+} \leq T$, and hence $\mathbb{P}[L_i \leq T] \leq \mathbb{P}[\hat{\tau}_{\hat{\beta}, i}-\hat{\tau}_{\hat{\beta}, i-1}^{+} \leq T]$. Further, from Lemma~\ref{lem:upper_bound_1}, we have $\mathbb{P}[\hat{\tau}_{\hat{\beta}, i} \leq \hat{\tau}_{\hat{\beta}, i-1}^{+}+T] = \mathbb{P}[\hat{\tau}_{\hat{\beta}, i} - \hat{\tau}_{\hat{\beta}, i-1}^{+}\leq T] \leq e^{-\big(\frac{\hat{\beta}^2}{2\sigma_d^2T}-\frac{2\B \hat{\beta}\transportCapacity}{\sigma_d^2}\big)}$.
Therefore,
\begin{align}\label{eq:epoch_length_bound}
    \textstyle \mathbb{P}[L_i \leq T] \leq e^{-\big(\frac{\hat{\beta}^2}{2\sigma_d^2T}-\frac{2\B \hat{\beta}\transportCapacity}{\sigma_d^2}\big)}.
\end{align}
Note that, if the length of every epoch is greater than $T$, the number of epochs occurring within the time interval $[0, T_f]$ is less than $K_{max}=\frac{T_f}{T}$. Therefore,
\begin{align}\label{eq:num_epochs_to_epoch_length}
    \textstyle \mathbb{P}[K \leq K_{max}] \geq \mathbb{P}[\cap_{i=0}^{K_{max}-1} \{ L_i > T\}].
\end{align}
Further, $\mathbb{P}[\cap_{i=0}^{K_{max}-1} \{ L_i > T\}] = 1-\mathbb{P}[\cup_{i=0}^{K_{max}-1}\{ L_i \leq T\}] \geq 1 - K_{max}e^{-\big(\frac{\hat{\beta}^2}{2\sigma_d^2T}-\frac{2\B \hat{\beta}\transportCapacity}{\sigma_d^2}\big)}$,
where the last inequality follows from the union bound and~\eqref{eq:epoch_length_bound}. Hence, from~\eqref{eq:num_epochs_to_epoch_length},
\begin{align*}
    & \textstyle \mathbb{P}[K \leq K_{max}] \geq 1 - K_{max}e^{-\big(\frac{\hat{\beta}^2}{2\sigma_d^2T}-\frac{2\B \hat{\beta}\transportCapacity}{\sigma_d^2}\big)}\\
    \implies & \textstyle \mathbb{P}[K > K_{max}] \leq K_{max}e^{-\big(\frac{\hat{\beta}^2}{2\sigma_d^2T}-\frac{2\B \hat{\beta}\transportCapacity}{\sigma_d^2}\big)}.
\end{align*}
This completes the proof.
\section{Proof of Lemma~\ref{lem:upper_bound_2}}\label{appendix:upper_bound_2}
\vspace{-0.2in}
\begin{align}\label{eq:upper_level_joint_probability_1}
   \mbox{We have,} \ & \textstyle \mathbb{P}[\hat{\tau}_{\beta, i} \leq \hat{\tau}_{\hat{\beta}, i}^{+}, \hat{\tau}_{\hat{\beta}, i}^{+} \leq T_f]\nonumber \\
    &\hspace{-0.9in} \textstyle = \mathbb{P}[\hat{\tau}_{\beta, i} \leq \hat{\tau}_{\hat{\beta}, i}^{+} | \hat{\tau}_{\hat{\beta}, i}^{+} \leq T_f]\times\mathbb{P}[\hat{\tau}_{\hat{\beta}, i}^{+} \leq T_f]\nonumber\\
    &\hspace{-0.9in}  \textstyle =\mathbb{P}[\sup_{\hat{\tau}_{\hat{\beta}, i} \leq t \leq \hat{\tau}_{\hat{\beta}, i}^{+}}X_d^{\pi^\star}(t) \geq \beta | \hat{\tau}_{\hat{\beta}, i}^{+} \leq T_f]\times\mathbb{P}[\hat{\tau}_{\hat{\beta}, i}^{+} \leq T_f]\nonumber\\
    & \hspace{-0.9in}  \textstyle = \mathbb{P}[\sup_{0 \leq h \leq \hat{\tau}_{\hat{\beta}, i}^{+}-\hat{\tau}_{\hat{\beta}, i}}X_d^{\pi^\star}(h+\hat{\tau}_{\hat{\beta}, i}) \geq \beta | \hat{\tau}_{\hat{\beta}, i}^{+} \leq T_f]\times
    \nonumber\\
    & \hspace{-0.65in}  \textstyle \mathbb{P}[\hat{\tau}_{\hat{\beta}, i}^{+} \leq T_f].
\end{align}
Note that, for all $h \geq 0$, we have
$ X_d^{\pi^\star}(h+\hat{\tau}_{\hat{\beta}, i}) = \sigma_d W_d(h+\hat{\tau}_{\hat{\beta}, i})-2\transportCapacity\int_0^{h+\hat{\tau}_{\hat{\beta}, i}} \sgn(X_d^{\pi^\star}(s))\text{d}s = X_d^{\pi^\star}(\hat{\tau}_{\hat{\beta}, i}) +\sigma_d[ W_d(h+\hat{\tau}_{\hat{\beta}, i}) - W_d(\hat{\tau}_{\hat{\beta}, i})] -2\transportCapacity\int_0^h \sgn(X_d^{\pi^\star}(v+\hat{\tau}_{\hat{\beta}, i}))\text{d}v.$
By the strong Markov property of Brownian motion, we have $\tilde{W}_d(h):=W_d(h+\hat{\tau}_{\hat{\beta}, i})-W_d(\hat{\tau}_{\hat{\beta}, i}), h \geq 0$, is again a Brownian motion [\cite{oksendal2013stochastic} Section 7.2, pp. 114]. Further by definition, $X_d^{\pi^\star}(\hat{\tau}_{\hat{\beta}, i})=\hat{\beta}$. Therefore,
\begin{align}\label{eq:time_shifted_x_sde_1}
    & \hspace{-0.3in} \textstyle X_d^{\pi^\star}(h+\hat{\tau}_{\hat{\beta}, i}) = \hat{\beta}+\sigma_d \tilde{W}_d(h)\nonumber\\
    & \hspace{0.75in} \textstyle -2\transportCapacity\int_0^h \sgn(X_d^{\pi^\star}(v+\hat{\tau}_{\hat{\beta}, i}))\text{d}v.
\end{align}
By definition of $\hat{\tau}_{\hat{\beta}, i}^{+}$, we have, for all $t \in [\hat{\tau}_{\hat{\beta}, i}, \hat{\tau}_{\hat{\beta}, i}^{+}]$, $X_d^{\pi}(t) \geq 0$, or equivalently, for all $h \in [0, \hat{\tau}_{\hat{\beta}, i}^{+}-\hat{\tau}_{\hat{\beta}, i}]$, $X_d^{\pi^\star}(h+\hat{\tau}_{\hat{\beta}, i}) \geq 0$. Hence, $\sgn(X_d^{\pi^\star}(h+\hat{\tau}_{\hat{\beta}, i}))=1$ for all $h \in [0, \min(T_f, \hat{\tau}_{\hat{\beta}, i}^{+})-\hat{\tau}_{\hat{\beta}, i}]$. Thus, from~\eqref{eq:time_shifted_x_sde_1}, we have 
\begin{align}\label{eq:time_shifted_x_sde_2}
    & \textstyle X_d^{\pi^\star}(h+\hat{\tau}_{\hat{\beta}, i}) = \hat{\beta}+\sigma_d \tilde{W}_d(h)-2\transportCapacity h,
\end{align}
for all $h \in [0, \hat{\tau}_{\hat{\beta}, i}^{+}-\hat{\tau}_{\hat{\beta}, i}]$.

We define $Y_d(0):=\hat{\beta}$ and $Y_d(h):=Y_d(0)+\sigma_d\tilde{W}_d(h)-2\transportCapacity h$, and let $\tau_\beta^{Y}:=\inf\{h\ge0:Y_d(h)=\beta\}$ and $\tau^{Y+}:=\inf\{h\ge0:Y_d(h)=0\}$. From~\eqref{eq:time_shifted_x_sde_2}, we have $X_d^{\pi^\star}(h+\hat{\tau}_{\hat{\beta},i})=Y_d(h)$ a.s. for $h\in[0,\hat{\tau}_{\hat{\beta},i}^{+}-\hat{\tau}_{\hat{\beta},i}]$, implying $\tau^{Y+}=\hat{\tau}_{\hat{\beta},i}^{+}-\hat{\tau}_{\hat{\beta},i}$ a.s. Here we can show that,
$\mathbb{P}[\hat{\tau}_{\beta,i}\le \hat{\tau}_{\hat{\beta},i}^{+},\,\hat{\tau}_{\hat{\beta},i}^{+}\le T_f]
\le \mathbb{P}(\tau_\beta^{Y}\le T_f)
= \mathbb{P}(\sup_{0\le t\le T_f}Y_d(t)\ge\beta)$. We omit the detailed algebraic steps to conserve space. Since $Y_d$ is a Brownian motion with constant drift, using Girsanov’s theorem, we have
$\mathbb{P}(\tau_\beta^{Y}\le T_f)\le \exp\!\left(-\frac{2\B\transportCapacity(\beta-\hat{\beta})}{\sigma_d^2}\right)$. Similarly, we can show $\mathbb{P}[\hat{\tau}_{\beta,i}\le T_f,\,\hat{\tau}_{\hat{\beta},i}^{+}>T_f]
\le \mathbb{P}(\tau_\beta^{Y}\le T_f)
\le \exp\!\left(-\frac{2\B\transportCapacity(\beta-\hat{\beta})}{\sigma_d^2}\right)$. This completes the proof.
\section{Proof of Theorem~\ref{thm:upper_bound}}\label{appendix:upper_bound}
For $\resourceCapacity^{(\infty)}=c_{\infty}\sqrt{T_f}$ and $\beta=T_f^{(\frac{1}{2}-\frac{\epsilon}{2})}<\resourceCapacity^{(\infty)}$, Proposition~\ref{prop:conservative_sol_two_mg} gives
$\mathbb{P}[A_1 \cup A_2 \cup A_3 \cup A_4]
\leq \mathbb{P}[X_{c,T_f}^{\pi^\star,\sup} \geq 2\resourceCapacity^{(\infty)}-\beta]
+ \mathbb{P}[X_{c,T_f}^{\pi^\star,\inf} \leq \beta]
+ \mathbb{P}[\sup_{0\leq t\leq T_f}|X_d^{\pi^\star}(t)| \geq \beta]$.
Since $X_c^{\pi^\star}(t)=\resourceCapacity^{(\infty)}+\sigma_c W_c(t)$ with $\sigma_c=\sqrt{2}\sigma$, using hitting-time bounds for Brownian motion~\cite{karatzas2012brownian} yields
$\mathbb{P}[X_{c,T_f}^{\pi^\star,\sup} \geq 2\resourceCapacity^{(\infty)}-\beta]
\leq e^{-\frac{(\resourceCapacity^{(\infty)}-\beta)^2}{2\sigma_c^2 T_f}}$
and
$\mathbb{P}[X_{c,T_f}^{\pi^\star,\inf} \leq \beta]
\leq e^{-\frac{(\resourceCapacity^{(\infty)}-\beta)^2}{2\sigma_c^2 T_f}}$.
Thus,
$\mathbb{P}[A_1 \cup A_2 \cup A_3 \cup A_4]
\leq 2e^{-\frac{(\resourceCapacity^{(\infty)}-\beta)^2}{2\sigma_c^2 T_f}}
+ \mathbb{P}[\sup_{0\leq t\leq T_f}|X_d^{\pi^\star}(t)| \geq \beta]$.
With $\beta=T_f^{(\frac{1}{2}-\frac{\epsilon}{2})}$, using~\eqref{eq:difference_process_zero}, the second term vanishes as $T_f\to\infty$, while
\[
\lim_{T_f\to\infty} 2e^{-\frac{(\resourceCapacity^{(\infty)}-\beta)^2}{2\sigma_c^2 T_f}}=\delta,
\] completing the proof.
\bibliographystyle{elsarticle-num}
\bibliography{references}
\end{document}